\newcommand{\namedref}[2]{\hyperref[#2]{#1~\ref*{#2}}}
\newcommand{\sectionref}[1]{\namedref{Section}{#1}}
\newcommand{\theoremref}[1]{\namedref{Theorem}{#1}}
\newcommand{\corollaryref}[1]{\namedref{Corollary}{#1}}
\newcommand{\figureref}[1]{\namedref{Figure}{#1}}
\newcommand{\lemmaref}[1]{\namedref{Lemma}{#1}}
\newcommand{\tableref}[1]{\namedref{Table}{#1}}
\newcommand{\defref}[1]{\namedref{Definition}{#1}}
\newcommand{\s}{\mspace{1mu}}
\newcommand{\mybox}[1]{\mspace{2mu}{\setlength{\fboxsep}{1.5pt}\color{lightgray}\boxed{\color{black}\scriptstyle #1}}\mspace{2mu}}
\newcommand{\M}{\mathsf{M}}
\renewcommand{\P}{\mathsf{P}}
\renewcommand{\O}{\mathsf{O}}
\newcommand{\X}{\mathsf{X}}
\newcommand{\bX}{\mybox{\X}}
\newcommand{\bMX}{\mybox{\M\X}}
\newcommand{\bOX}{\mybox{\O\X}}
\newcommand{\bPOX}{\mybox{\P\O\X}}
\newcommand{\bMOX}{\mybox{\M\O\X}}
\newcommand{\bMPOX}{\mybox{\M\P\O\X}}
\newtheorem{theorem}{Theorem}[section]
\newtheorem{lemma}[theorem]{Lemma}
\newtheorem{corollary}[theorem]{Corollary}
\theoremstyle{definition}
\newtheorem{definition}{Definition}[section]
\DeclareMathOperator{\poly}{poly}
\begin{document}
\title{Lower Bounds for Maximal Matchings and Maximal Independent Sets}

\author{Alkida Balliu}
\affiliation{%
  \institution{Gran Sasso Science Institute}
  \city{L'Aquila}
  \country{Italy}}
\email{alkida.balliu@gssi.it}

\author{Sebastian Brandt}
\affiliation{%
  \institution{CISPA Helmholtz Center for Information Security}
  \city{Saarbr\"ucken}
  \country{Germany}}
\email{brandt@cispa.de}

\author{Juho Hirvonen}
\affiliation{%
  \institution{Aalto University}
  \city{Helsinki}
  \country{Finland}}
\email{juho.hirvonen@aalto.fi}

\author{Dennis Olivetti}
\affiliation{%
  \institution{Gran Sasso Science Institute}
  \city{L'Aquila}
  \country{Italy}}
\email{dennis.olivetti@gssi.it}

\author{Mika\"el Rabie}
\affiliation{%
  \institution{Aalto University}
  \city{Helsinki}
  \country{Finland}}
\affiliation{%
  \institution{Universit\'e de Paris}
  \city{Paris}
  \country{France}}
\email{mikael.rabie@irif.fr}

\author{Jukka Suomela}
\affiliation{%
  \institution{Aalto University}
  \city{Helsinki}
  \country{Finland}}
\email{jukka.suomela@aalto.fi}

\begin{abstract}
There are distributed graph algorithms for finding maximal matchings and maximal independent sets in $O(\Delta + \log^* n)$ communication rounds; here $n$ is the number of nodes and $\Delta$ is the maximum degree. The lower bound by Linial (1987, 1992) shows that the dependency on $n$ is optimal: these problems cannot be solved in $o(\log^* n)$ rounds even if $\Delta = 2$. However, the dependency on $\Delta$ is a long-standing open question, and there is currently an exponential gap between the upper and lower bounds.

We prove that the upper bounds are tight. We show that any algorithm that finds a maximal matching or maximal independent set with probability at least $1-1/n$ requires $\Omega(\min\{\Delta,\log \log n / \log \log \log n\})$ rounds in the LOCAL model of distributed computing. As a corollary, it follows that any deterministic algorithm that finds a maximal matching or maximal independent set requires $\Omega(\min\{\Delta, \log n / \log \log n\})$ rounds; this is an improvement over prior lower bounds also as a function of~$n$.
\end{abstract}

\begin{CCSXML}
<ccs2012>
<concept>
<concept_id>10003752.10003753.10003761.10003763</concept_id>
<concept_desc>Theory of computation~Distributed computing models</concept_desc>
<concept_significance>500</concept_significance>
</concept>
<concept>
<concept_id>10003752.10003777.10003778</concept_id>
<concept_desc>Theory of computation~Complexity classes</concept_desc>
<concept_significance>300</concept_significance>
</concept>
</ccs2012>
\end{CCSXML}

\ccsdesc[500]{Theory of computation~Distributed computing models}
\ccsdesc[300]{Theory of computation~Complexity classes}

\keywords{Maximal matching, maximal independent set, distributed graph algorithms, lower bounds}

\maketitle

\section{Introduction}

There are four classic problems that have been studied extensively in distributed graph algorithms since the very beginning of the field in the 1980s~\cite{Ghaffari2017b}: \emph{maximal independent set} (MIS), \emph{maximal matching} (MM), \emph{vertex coloring with $\Delta+1$ colors}, and \emph{edge coloring with $2\Delta-1$ colors}; here $\Delta$ is the maximum degree of the graph. All of these problems are trivial to solve with a greedy centralized algorithm, but their distributed computational complexity has remained an open question.

In this work, we resolve the distributed complexity of MIS and MM in the region $\Delta \ll \log \log n$. In this region, for the LOCAL model~\cite{Linial1992,Peleg2000} of distributed computing, the fastest known algorithms for these problems are:
\begin{itemize}
    \item MM is possible in $O(\Delta + \log^* n)$ rounds~\cite{panconesi01simple}.
    \item MIS is possible in $O(\Delta + \log^* n)$ rounds~\cite{barenboim14distributed}.
\end{itemize}
Nowadays we know how to find a vertex or edge coloring with $O(\Delta)$ colors in $o(\Delta) + O(\log^* n)$ rounds~\cite{fraigniaud16local,barenboim16sublinear}. Hence the current algorithms for both MIS and MM are conceptually very simple: color the vertices or edges with $O(\Delta)$ colors, and then construct an independent set or matching by going through all color classes one by one (first select all nodes or edges of color 1, then select the nodes or edges of color 2 that are not adjacent to previously selected nodes or edges, then do the same for color 3, etc.). The second part is responsible for the $O(\Delta)$ term in the running time, and previously we had no idea if this is necessary.

\begin{table}
    \centering
    \caption{Efficient algorithms for MM and MIS.}\label{tab:related}
    \begin{tabular}{lllll}
        \toprule
        Problem & $\Delta$ & Randomness & Complexity & Reference \\
        \midrule
        MM & small & deterministic & $O(\Delta + \log^* n)$ & \citet{panconesi01simple} \\
        & large & deterministic & $O(\log^2\Delta \cdot \log n)$ & \citet{fischer17improved} \\
        & large & randomized & $O(\log \Delta + \log^3 \log n)$ & \citet{fischer17improved,Barenboim2012,Barenboim2016} \\
        \midrule
        MIS & small & deterministic & $O(\Delta + \log^* n)$ & \citet{barenboim14distributed} \\
        & large & deterministic & $O(\log^7 n)$ & \citet{Rozhon2020} \\
        & large & randomized & $O(\log \Delta + \log^7\log n)$ & \citet{Rozhon2020} \\
        \bottomrule
    \end{tabular}
\end{table}

\subsection{Prior work}

Already in the 1990s we had a complete understanding of the term $\log^* n$:
\begin{itemize}
    \item MM is not possible in $f(\Delta) + o(\log^* n)$ rounds for any $f$~\cite{Linial1987,Linial1992,Naor1991}.
    \item MIS is not possible in $f(\Delta) + o(\log^* n)$ rounds for any $f$~\cite{Linial1987,Linial1992,Naor1991}.
\end{itemize}
Here the upper bounds are deterministic and the lower bounds hold also for randomized algorithms.

However, we have had no lower bounds that would exclude the existence of algorithms of complexity $o(\Delta) + O(\log^* n)$ for either of these problems~\cite{Barenboim2013,Suomela2014}. For regular graphs we have not even been able to exclude the possibility of solving both of these problems in time $O(\log^* n)$, while for the general case the best lower bound as a function of $\Delta$ was $\Omega(\log \Delta / \log \log \Delta)$~\cite{Kuhn2004,Kuhn2006,kuhn16local,Coupette2020}.

\subsection{Contributions}

We close the gap and prove that the current upper bounds are tight. There is no algorithm of complexity $o(\Delta) + O(\log^* n)$ for MM or MIS. More precisely, our main result is:
\begin{center}
\parbox{112mm}{
\begin{framed}
    \begin{center}
    Any algorithm that solves MM or MIS with probability at least $1-1/n$ requires\\[2pt]
    $\displaystyle \Omega\Bigl(\min\Bigl\{\Delta, \frac{\log \log n}{\log \log \log n}\Bigr\}\Bigr)$
    rounds in the LOCAL model.

    \vspace{6mm}
    Any deterministic algorithm that solves MM or MIS requires\\[2pt]
    $\displaystyle \Omega\Bigl(\min\Bigl\{\Delta, \frac{\log n}{\log \log n}\Bigr\}\Bigr)$
    rounds in the LOCAL model.
    \end{center}
\end{framed}
}
\end{center}

\noindent
As corollaries, we have a new separation and a new equivalence in the $\Delta \ll \log \log n$ region (see \sectionref{ssec:related-coloring}):
\begin{itemize}
    \item MM and MIS are strictly harder than $(\Delta+1)$-vertex coloring and $(2\Delta-1)$-edge coloring.
    \item MM and MIS are exactly as hard as greedy coloring.
\end{itemize}

\begin{figure}
    \centering
    \includegraphics[page=7,scale=0.7]{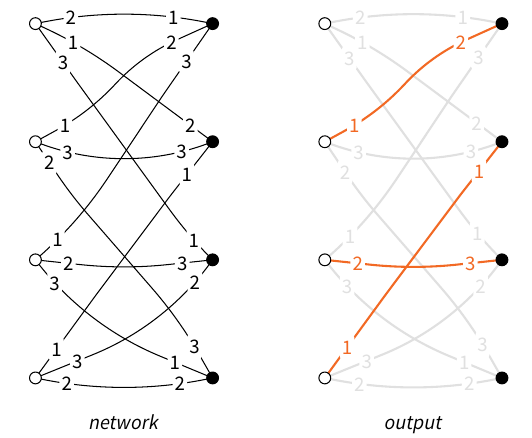}
    \Description{}
    \caption{Distributed algorithms for maximal matching: upper bounds (blue dots) and lower bounds (orange regions). Filled dots are randomized algorithms and filled regions are lower bounds for randomized algorithms; white dots are deterministic algorithms and white regions are lower bounds for deterministic algorithms. The running time is represented here in the form $O(f(\Delta) + g(n))$, the horizontal axis represents the $f(\Delta)$ term, and the vertical axis represents the $g(n)$ term.}\label{fig:mm-bounds}
\end{figure}

\subsection{Plan}

We will present a simpler version of our new linear-in-$\Delta$ lower bound in \sectionref{sec:deterministic}. There we will introduce a restricted setting of distributed computing---deterministic algorithms in the port-numbering model---and explain the key ideas using that. In \sectionref{sec:randomized} we will then see how to extend the result to randomized and deterministic algorithms in the usual LOCAL model of computing.

\section{Related work}\label{sec:related}

\subsection{MIS and MM}

For MIS and MM, as well as for other classical symmetry-breaking problems, there are three major families of algorithms:
\begin{itemize}
    \item Deterministic algorithms suitable for small $\Delta$, with a complexity of the form $f(\Delta) + O(\log^* n)$.
    \item Deterministic algorithms suitable also for large $\Delta$, with superlogarithmic complexities as a function of $n$.
    \item Randomized algorithms suitable also for large $\Delta$, with at most logarithmic complexities as a function of $n$.
\end{itemize}
We summarize the state of the art in \tableref{tab:related} and \figureref{fig:mm-bounds}; see e.g.\ \citet{Alon1986}, \citet{Luby1985,Luby1986}, \citet{Israeli1986}, \citet{panconesi96decomposition}, \citet{Hanckowiak2001,Hanckowiak1998}, \citet{Barenboim2012,Barenboim2016}, and \citet{ghaffari16improved} for more prior work on maximal matchings and maximal independent sets.
The listed upper bounds for MIS by Rozho\v{n} and Ghaffari~\cite{Rozhon2020} were obtained after the preliminary conference version of our work.

Previously, it was not known if any of these algorithms are optimal. In essence, there have been only two lower bound results:
\begin{itemize}
    \item \citet{Linial1987,Linial1992} and \citet{Naor1991} show that any deterministic or randomized algorithm for MM or MIS requires $\Omega(\log^* n)$ rounds, even if we have $\Delta = 2$.
    \item \citet{Kuhn2004,Kuhn2006,kuhn16local} show that any deterministic or randomized algorithm for MM or MIS requires \[\Omega\biggl(\min\biggl\{\frac{\log \Delta}{\log \log \Delta}, \sqrt{\frac{\log n}{\log \log n}}\biggr\}\biggr)\] rounds.
\end{itemize}
Hence, for example, when we look at the fastest MM algorithms, dependency on $n$ in $O(\Delta + \log^* n)$ is optimal, and dependency on $\Delta$ in $O(\log \Delta + \log^3 \log n)$ is near-optimal. However, could we get the best of both worlds and solve MM or MIS in e.g.\ $O(\log\Delta + \log^* n)$ rounds?

In this work we show that the answer is no. There is no algorithm that runs in time $o(\Delta) + O(\log^* n)$. The current upper bounds for the case of a small $\Delta$ are optimal. Moreover, our work shows there is not much room for improvement in the dependency on $n$ in the algorithm by \citet{fischer17improved}, either.

With this result we resolve Open Problem 11.6 in the manuscript of \citet{Barenboim2013}, and present a proof for the conjecture of \citet{Goos2017}.

\subsection{Coloring}\label{ssec:related-coloring}

It is interesting to compare MIS and MM with the classical distributed coloring problems: vertex coloring with $\Delta+1$ colors and edge coloring with $2\Delta-1$ colors~\cite{Barenboim2013}. As recently as in 2014, the fastest algorithms for all of these problems in the ``small $\Delta$'' region had the same complexity as MIS and MM, $O(\Delta + \log^* n)$ rounds~\cite{barenboim14distributed}. 
However, in 2015 the paths diverged: \citet{barenboim16sublinear} and \citet{fraigniaud16local} have presented algorithms for graph coloring in $o(\Delta) + O(\log^* n)$ rounds, and hence we now know that coloring is strictly easier than MM or MIS in the small $\Delta$ region.

The only known lower bound for $(\Delta+1)$-vertex coloring and $(2\Delta-1)$-edge coloring is $\Omega(\log^* n)$~\cite{Linial1987,Linial1992,Naor1991}; better lower bounds are known only for coloring algorithms of a certain form, called locally-iterative algorithms~\cite{szegedy1993locality,kuhn2006complexity}.

However, there is a variant of $(\Delta+1)$-vertex coloring that is closely related to MIS: \emph{greedy coloring}~\cite{Gavoille2009}. Greedy coloring is trivially at least as hard as MIS, as color class $1$ in any greedy coloring gives an MIS. On the other hand, greedy coloring is possible in time $O(\Delta + \log^* n)$, as we can turn an $O(\Delta)$-vertex coloring into a greedy coloring in $O(\Delta)$ rounds (and this was actually already known to be tight). Now our work shows that greedy coloring is exactly as hard as MIS. In a sense this is counterintuitive: finding just color class $1$ of a greedy coloring is already asymptotically as hard as finding the entire greedy coloring.

\subsection{Restricted lower bounds}

While no linear-in-$\Delta$ lower bounds for MM or MIS were known previously for the usual LOCAL model of distributed computing, there was a tight bound for a toy model of distributed computing: deterministic algorithms in the \emph{edge coloring model}. Here we are given a proper edge coloring of the graph with $\Delta+1$ colors, the nodes are anonymous, and the nodes can use the edge colors to refer to their neighbors. In this setting there is a trivial algorithm that finds an MM in $O(\Delta)$ rounds: go through color classes one by one and greedily add edges that are not adjacent to anything added so far. It turns out there is a matching lower bound: no algorithm solves this problem in $o(\Delta)$ rounds in the same model~\cite{Hirvonen2012}.

The same technique was later used to study \emph{maximal fractional matchings} in the LOCAL model of computing. This is a problem that can be solved in $O(\Delta)$ rounds (independent of $n$) in the usual LOCAL model~\cite{Astrand2010}, and there was a matching lower bound that shows that the same problem cannot be solved in $o(\Delta)$ rounds (independent of $n$) in the same model~\cite{Goos2017}.

While these lower bounds were seen as a promising indicator that there might be a linear-in-$\Delta$ lower bound in a more general setting, the previous techniques turned out to be a dead end. In particular, they did not tell anything nontrivial about the complexity of MM or MIS in the usual LOCAL model. Now we know that an entirely different kind of approach was needed---even though the present work shares some coauthors with~\cite{Hirvonen2012,Goos2017}, the techniques of the present work are entirely unrelated to those.

\subsection{Speedup simulation technique}

The technique that we use in this work is based on \emph{speedup simulation}, a.k.a., \emph{round elimination}. In essence, the idea is that we assume we have an algorithm $A$ that solves a problem $\Pi$ in $T$ rounds, and then we construct a new algorithm $A'$ that solves another problem $\Pi'$ in $T' \le T-1$ rounds. A node in algorithm $A'$ gathers its radius-$T'$ neighborhood, considers all possible ways of extending it to a radius-$T$ neighborhood, simulates $A$ for each such extension, and then uses the output of $A$ to choose its own output. Now if we can iterate the speedup step for $k$ times (without reaching a trivial problem), we know that the original problem requires at least $k$ rounds to solve.

This approach was first used by \citet{Linial1987,Linial1992} and \citet{Naor1991} to prove that graph coloring in cycles requires $\Omega(\log^* n)$ rounds. This was more recently used to prove lower bounds for \emph{sinkless orientations}, algorithmic Lov\'asz local lemma, and $\Delta$-coloring~\cite{Brandt2016,chang18complexity}, as well as to prove lower bounds for \emph{weak coloring}~\cite{Brandt2019automatic,BalliuHOS19}.

In principle, the approach can be used with \emph{any} locally checkable graph problem, in a mechanical manner~\cite{Brandt2019automatic}. However, if one starts with a natural problem $\Pi$ (e.g.\ MIS, MM, or graph coloring) and applies the speedup simulation in a mechanical manner, the end result is typically a problem $\Pi'$ that does not have any natural interpretation or simple description, and it gets quickly exponentially worse. The key technical challenge that the present work overcomes is the construction of a sequence of nontrivial problems $\Pi_1, \Pi_2, \dotsc$ such that each of them has a relatively simple description and we can nevertheless apply speedup simulation for any consecutive pair of them.
Note that the simplicity of the descriptions is crucial in two ways: it allows us to prove the desired relation between the complexities of any two subsequent problems in the sequence, and it enables us to show that the problems indeed cannot be solved in $0$ rounds. 

The formalism that we use is closely related to~\cite{Brandt2019automatic}---in essence, we generalize the formalism from graphs to hypergraphs, then represent the hypergraph as a bipartite graph, and we arrive at the formalism that we use in the present work to study maximal matchings in bipartite graphs.

\section{Lower bound in the port-numbering model}\label{sec:deterministic}

Consider the following setting: We have a \emph{$\Delta$-regular bipartite graph}; the nodes in one part are white and in the other part black. Each node has a \emph{port numbering} for the incident edges; the endpoints of the edges incident to a node are numbered in some arbitrary order with $1,2,\dotsc,\Delta$. See \figureref{fig:network} for an illustration.

\begin{figure}
    \centering
    \includegraphics[page=1,scale=0.8]{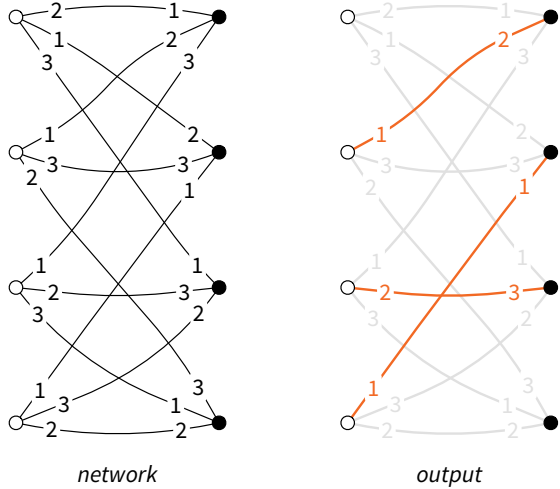}
    \Description{}
    \caption{A $3$-regular bipartite port-numbered network and a maximal matching.}\label{fig:network}
\end{figure}

The graph represents the topology of a communication network: each node is a computer, and each edge is a communication link. Initially each computer only knows its own color (black or white) and the number of ports ($\Delta$); the computers are otherwise identical. Computation proceeds in \emph{synchronous communication rounds}---in each round, each node can send an arbitrary message to each of its neighbors, then receive a message from each of its neighbors, and update its own state. After some $T$ communication rounds, all nodes have to stop and announce their own part of the solution; here $T$ is the \emph{running time} of the algorithm.

We are interested in algorithms for finding a \emph{maximal matching}; eventually each node has to know whether it is matched and in which port. There is a very simple algorithm that solves this in $T = O(\Delta)$ rounds~\cite{Hanckowiak1998}: In iteration $i = 1,2,\dotsc,\Delta$, unmatched white nodes send a proposal to their port number $i$, and black nodes accept the first proposal that they receive, breaking ties using port numbers. See \figureref{fig:proposal} for an example.

\begin{figure}
    \centering
    \includegraphics[page=2,scale=0.8]{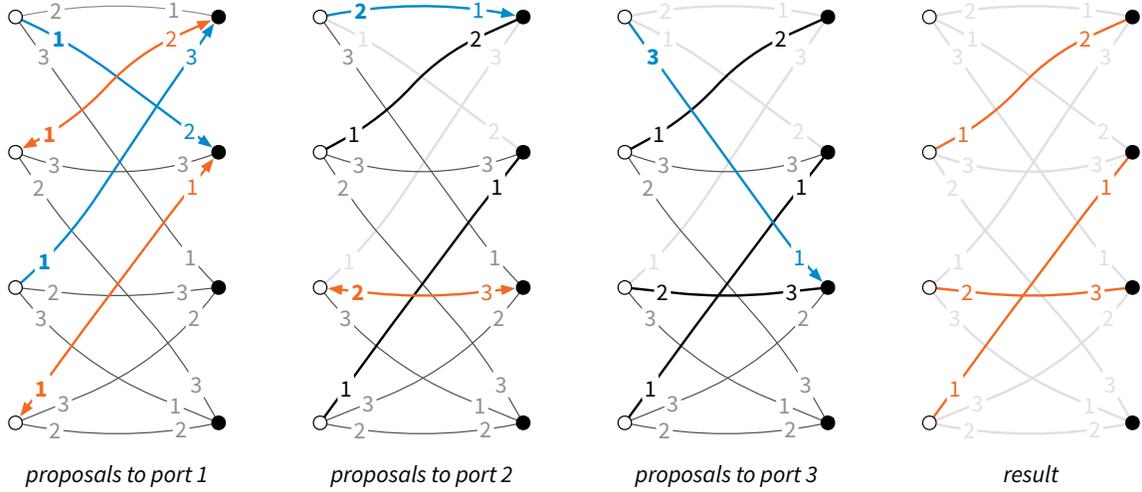}
    \Description{}
    \caption{The proposal algorithm finds a maximal matching in $O(\Delta)$ rounds---orange arrows are accepted proposals and blue arrows are rejected proposals.}\label{fig:proposal}
\end{figure}

Hence bipartite maximal matching can be solved in $O(\Delta)$ rounds in $\Delta$-regular two-colored graphs, and the running time is independent of the number of nodes. Surprisingly, nobody has been able to tell if this algorithm is optimal, or anywhere close to optimal. There are no algorithms that break the linear-in-$\Delta$ barrier (without introducing some dependency on $n$ in the running time), and there are no nontrivial lower bounds---we have not been able to exclude even the possibility of solving maximal matchings in this setting in e.g.\ $10$ rounds, independently of $\Delta$ and $n$. If we look at a bit more general setting of graphs of degree at most $\Delta$ (instead of $\Delta$-regular graphs), there is a lower bound of $\Omega(\log \Delta / \log \log \Delta)$~\cite{Kuhn2004,Kuhn2006,kuhn16local}, but there is still an exponential gap between the upper and the lower bound.

In this section we show that the trivial proposal algorithm is indeed optimal: there is no algorithm that finds a maximal matching in $o(\Delta)$ rounds in this setting. We will later extend the result to more interesting models of computing, but for now we will stick to the case of deterministic algorithms in the port-numbering model, as it is sufficient to explain all key ideas.

\subsection{Lower bound idea}

Our plan is to prove a lower bound using a \emph{speedup simulation argument}~\cite{Linial1987,Linial1992,Naor1991,Brandt2016,chang18complexity,Brandt2019automatic,Balliu2019}. The idea is to define a sequence of graph problems $\Pi_1, \Pi_2, \dotsc, \Pi_k$ such that if we have an algorithm $A_i$ that solves $\Pi_i$ in $T_i$ rounds, we can construct an algorithm $A_{i+1}$ that solves $\Pi_{i+1}$ strictly faster, in $T_{i+1} \le T_i - 1$ rounds (unless $T_i = 0$, in which case we would get that $T_{i+1}=0$). Put otherwise, we show that solving $\Pi_i$ takes at least one round more than solving $\Pi_{i+1}$. Then if we can additionally show that $\Pi_k$ is still a nontrivial problem that cannot be solved in zero rounds, we know that the complexity of $\Pi_1$ is at least $k$ rounds. 

Now we would like to let $\Pi_1$ be the maximal matching problem, and identify a suitable sequence of relaxations of the maximal matching problem. A promising candidate might be, e.g., the following problem that we call here a \emph{$k$-matching} for brevity.
\begin{definition}[$k$-matching]\label{def:k-matching}
    Given a graph $G = (V,E)$, a set of edges $M \subseteq E$ is a $k$-matching if
    \begin{enumerate}
        \item every node is incident to at most $k$ edges of $M$,
        \item if a node is not incident to any edge of $M$, then all of its neighbors are incident to at least one edge of $M$.
    \end{enumerate}
\end{definition}
Note that with this definition, a $1$-matching is exactly the same thing as a maximal matching. Also it seems that finding a $k$-matching is easier for larger values of $k$. For example, we could modify the proposal algorithm so that in each iteration white nodes send $k$ proposals in parallel, and stop as soon as at least one is accepted, and this way find a $k$-matching in $O(\Delta/k)$ rounds.

We could try to define $\Pi_i$ as the problem of finding an $i$-matching. Then, we could try to prove that, given an algorithm for finding an $i$-matching, we can construct a strictly faster algorithm for finding an $(i+1)$-matching. 
Unfortunately, a direct attack along these lines does not seem to work, but this serves nevertheless as a useful guidance that will point us in the right direction.

\subsection{Formalism and notation}

We will first make the setting as simple and localized as possible. It will be convenient to study graph problems that are of the following form---we call these \emph{edge labeling problems}:
\begin{enumerate}
    \item The task is to label the edges of the bipartite graph with symbols from some \emph{alphabet} $\Sigma$.
    \item A \emph{problem specification} is a pair $\Pi = (W,B)$, where $W$ is the set of feasible labelings of the edges incident to a white node, and $B$ is the set of feasible labelings for the edges incident to a black node.
\end{enumerate}

In these problems, the feasibility of a solution does not depend on the port numbering. Hence each member of $W$ and $B$ is a \emph{multiset} that contains $\Delta$ elements from alphabet $\Sigma$. For example, if we have $\Sigma = \{0,1\}$ and $\Delta = 3$, then $W = \bigl\{ \{0,0,0\}, \{0,0,1\} \bigr\}$ indicates that a white node is happy (that is, the labeling of its incident edges is correct according to $W$) if it is incident to exactly $0$ or $1$ edges with label~$1$.

However, for brevity we will here represent multisets as \emph{words}, and write e.g.\ $W = \bigl\{ 0\s0\s0, 0\s0\s1 \bigr\}$. We emphasize that the order of the elements does not matter here, and we could equally well write e.g.\ $W = \bigl\{ 0\s0\s0, 0\s1\s0 \bigr\}$. Now that $W$ and $B$ are languages over alphabet $\Sigma$, we can conveniently use regular expressions to represent them. When $x_1, x_2, \dotsc, x_k \in \Sigma$ are symbols of the alphabet, we use the shorthand notation $[x_1 x_2 \dotsc x_k] = ( x_1 | x_2 | \dotso | x_k )$. With this notation, we can represent the above example concisely as $W = 0\s0\s0 \mid 0\s0\s1$, or $W = 0\s0\s[01]$, or even $W = 0^2\s[01]$.

\subsubsection{Example: encoding maximal matchings}

\begin{figure}
    \centering
    \includegraphics[page=3,scale=0.8]{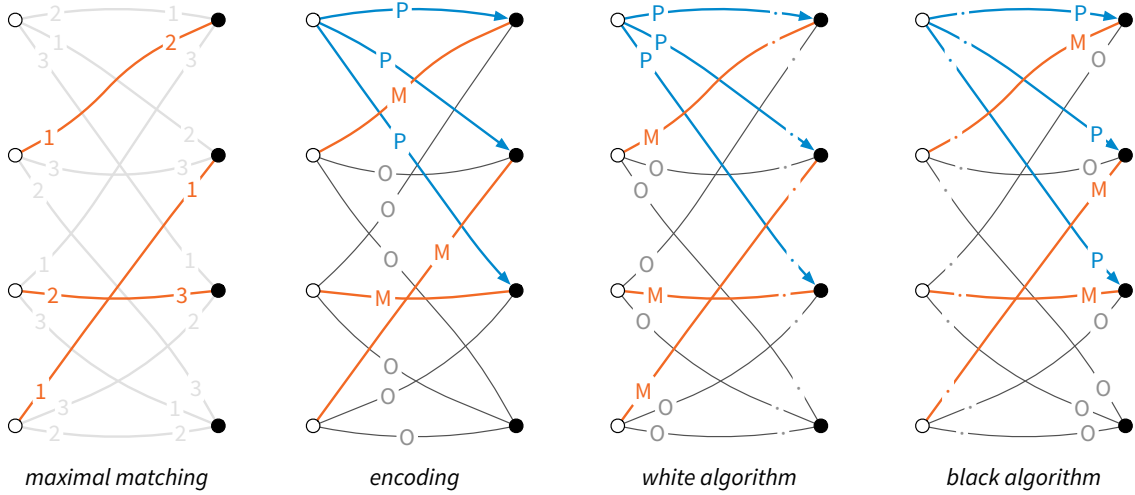}
    \Description{}
    \caption{Encoding maximal matchings with $\Sigma = \{ \M, \P, \O \}$.}\label{fig:mm-mpo}
\end{figure}

The most natural way to encode maximal matchings would be to use e.g.\ labels $0$ and $1$ on the edges, with $1$ to indicate an edge in the matching. However, this is not compatible with the above formalism: we would have to have $0^\Delta \in W$ and $0^\Delta \in B$ to allow for unmatched nodes, but then we would also permit a trivial all-$0$ solution. To correctly capture the notion of maximality, we will use three labels, $\Sigma = \{ \M, \P, \O \}$, with the following rules:
\begin{equation}
\begin{aligned}
W &= \M\s\O^{\Delta-1} \bigm| \P^{\Delta}, \\
B &= \M\s[\P\O]^{\Delta-1} \bigm| \O^{\Delta}.
\end{aligned}
\label{eq:mm}
\end{equation}
For a matched white node, one edge is labeled with an $\M$ (matched) and all other edges are labeled with an $\O$ (other). However, for an unmatched white node, all incident edges have to be labeled with a $\P$ (pointer); the intuition is that $\P$ points to a matched black neighbor. The rules for the black nodes ensure that pointers do not point to unmatched black nodes (a $\P$ implies exactly one~$\M$), and that black nodes are unmatched only if all white neighbors are matched (all incident edges labeled with $\O$s). See \figureref{fig:mm-mpo} for an illustration.

\subsubsection{White and black algorithms}

Let $\Pi = (W,B)$ be an edge labeling problem. We say that $A$ is a \emph{white algorithm} that solves $\Pi$ if in $A$ each white node outputs a labeling of its incident edges, and such a labeling forms a feasible solution to $\Pi$. Black nodes produce an empty output.

Conversely, in a \emph{black algorithm}, each black node outputs the labels of its incident edges, and white nodes produce an empty output. See \figureref{fig:mm-mpo} for illustrations. Note that a black algorithm can be easily turned into a white algorithm if we use one additional communication round, and vice versa.

\subsubsection{Infinite trees vs.\ finite regular graphs}\label{sec:treesvsgraphs}

It will be convenient to first present the proof for the case of infinite $\Delta$-regular trees. In essence, we will show that any algorithm $A$ that finds a maximal matching in $T = o(\Delta)$ rounds will fail around some node $u$ in some infinite $\Delta$-regular tree $G$ (for some specific port numbering), where failing around node $u$ means that the labels assigned to the edges incident to $u$ form a configuration that is not allowed by the problem. Then it is also easy to construct a finite $\Delta$-regular graph $G'$ such that the radius-$(T+1)$ neighborhood of $u$ in $G$ (including the port numbering) is isomorphic to the radius-$(T+1)$ neighborhood of some node $u'$ in $G'$, and therefore $A$ will also fail around $u'$ in $G'$.

\subsection{Parametrized problem family}

We will now introduce a parametrized family of problems $\Pi_\Delta(x,y)$, where $x+y \le \Delta$. The problem is defined so that $\Pi_\Delta(0,0)$ is equivalent to maximal matchings \eqref{eq:mm} and the problem becomes easier when we increase $x$ or $y$. We will use the alphabet $\Sigma = \{ \M, \P, \O, \X \}$, where $\M$, $\P$, and $\O$ have a role similar to maximal matchings and $\X$ acts as a \emph{wildcard}. We define $\Pi_\Delta(x,y) = \bigl(W_\Delta(x,y),\allowbreak B_\Delta(x,y)\bigr)$, where
\begin{equation}
\begin{aligned}
W_\Delta(x,y) &= \Bigl( \M\s\O^{d-1} \Bigm| \P^d \Bigr) \s \O^y\s\X^x, \\
B_\Delta(x,y) &= \Bigl( [\M\X]\s[\P\O\X]^{d-1} \Bigm| [\O\X]^d \Bigr) \s[\P\O\X]^y \s[\M\P\O\X]^x,
\end{aligned}
\label{eq:pixyd}
\end{equation}
where $d = \Delta-x-y$.

The following partial order represents the ``strength'' of the symbols from the perspective of black nodes:
\begin{equation}
\begin{tikzcd}[column sep=small, row sep=tiny]
& \M \arrow[rd] & \\
& & \X \\
\P \arrow[r] & \O \arrow[ru] & \\
\end{tikzcd}
\label{eq:4states}
\end{equation}
The interpretation is that from the perspective of $B_\Delta(x,y)$, symbol $\X$ is feasible wherever $\M$ or $\O$ is feasible, and $\O$ is feasible wherever $\P$ is feasible. Hence, we say that $\X$ is stronger than $\M$, $\O$, and $\P$, and that $\O$ is stronger than $\P$. Furthermore, all relations are strict in the sense that e.g.\ replacing an $\X$ with an $\M$ may lead to a word not in $B_\Delta(x,y)$.

Here are three examples of problems in family $\Pi_\Delta(\cdot,\cdot)$, with some intuition (from the perspective of a white algorithm):
\begin{itemize}
    \item $\Pi_\Delta(0,0)$: Maximal matching. Note that we cannot use symbol $\X$ at all, as it does not appear in $W_\Delta(0,0)$.
    \item $\Pi_\Delta(0,1)$: Unmatched white nodes will use $\O$ instead of $\P$ once---note that by \eqref{eq:4states} this is always feasible for the black node at the other end of the edge. Unmatched black nodes can accept $\P$ instead of $\O$ once.
    \item $\Pi_\Delta(1,0)$: All white nodes will use $\X$ instead of $\P$ or $\O$ once---again, this is always feasible. All black nodes can accept anything from one port.
\end{itemize}
In essence, $\Pi_\Delta(0,y)$ resembles a problem in which we can violate maximality, while in $\Pi_\Delta(x,0)$ we can violate the packing constraints.

\subsection{Speedup simulation}\label{ssec:speedup}

\begin{figure}
    \centering
    \includegraphics[page=4,scale=0.9]{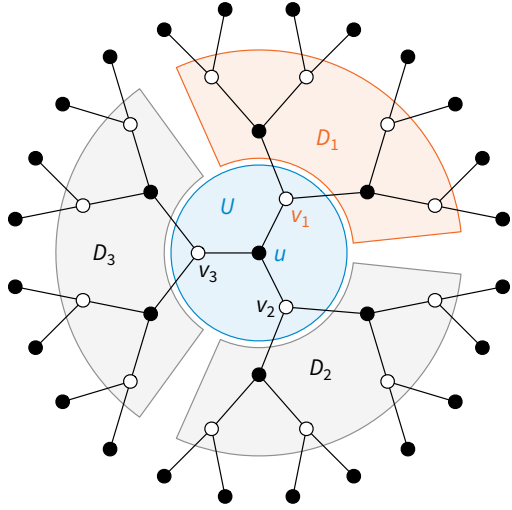}
    \Description{}
    \caption{Speedup simulation, for $T=2$ and $\Delta=3$. The radius-$(T-1)$ neighborhood of $u$ is $U$, and the radius-$T$ neighborhood of $v_i$ is $V_i = U \cup D_i$. The key observation is that $D_i$ and $D_j$ are disjoint for $i \ne j$.}\label{fig:speedup}
\end{figure}

Assume that $A$ is a white algorithm that solves $\Pi_\Delta(x,y)$ in $T \ge 1$ rounds in trees, for a sufficiently large $\Delta$. Throughout this section, let $d = \Delta - x - y$, and assume that $\Delta \ge 2x+y+1$.
We are going to proceed as follows.
First, we construct a black algorithm $A_1$ that runs in $T-1$ rounds and works as follows.
Given a $T-1$-radius neighborhood $U$, it extends $U$ into a $T$-radius neighborhood around a white node in all possible ways, then it simulates $A$ in all the obtained neighborhoods, and finally, for each incident edge, it outputs the set of the outputs given by $A$.
On each edge, the output of algorithm $A_1$ is a set, and there are $15$ possible sets that could appear (namely all non-empty subsets of the output label set for $\Pi_\Delta(x,y)$).
As our goal is to obtain a problem sequence where each problem is from the family $\Pi_\Delta(\cdot,\cdot)$, we need to reduce the number of labels back to $4$.
We start by reducing the number of possible outputs to $6$, by defining a new algorithm $A_2$ that first executes $A_1$ and then maps the aforementioned sets to labels, such that different sets could be mapped to the same label.
We could try to exactly characterize the problem solved by $A_2$, by listing the constraints defined over a set of $6$ labels that an output of $A_2$ satisfies. Instead, we are going to \emph{simplify} the output of $A_2$, and reduce the number of labels from $6$ to $4$ by \emph{identifying} some labels, which yields some algorithm $A_3$.
Finally, we are going to define algorithm $A_4$, that does nothing else than executing $A_3$ and then renaming the obtained labels, and we are going to show that $A_4$ solves $\Pi_\Delta(x+1,y+x)$, but in a graph where the role of black and white nodes is reversed. For a more intuitive explanation of this approach, please see Section ~\ref{sec:behind}.

\subsubsection{Algorithm $A_1$}

We will first construct a black algorithm $A_1$ that runs in time $T-1$, as follows:

\begin{framed}
    Each black node $u$ gathers its radius-$(T-1)$ neighborhood $U$; see \figureref{fig:speedup}. Let the white neighbors of $u$ be $v_1, v_2, \dotsc, v_\Delta$. Let $V_i$ be the radius-$T$ neighborhood of $v_i$, and let $D_i = V_i \setminus U$ be the part of $V_i$ that $u$ does not see.
    
    For each $i$, go through all possible inputs that one can assign to $D_i$; here the only unknown part is the port numbering that we have in the region $D_i$. Then simulate $A$ for each possible input and see how $A$ labels the edge $e_i = \{u,v_i\}$. Let $S_i$ be the set of labels that $A$ assigns to edge $e_i$ for some input $D_i$.
    
    Algorithm $A_1$ labels edge $e_i$ with set $S_i$.
\end{framed}

\subsubsection{Algorithm $A_2$}

Now since the output alphabet of $A$ is $\Sigma = \{ \M, \P, \O, \X \}$, the new output alphabet of $A_1$ consists of its 15 nonempty subsets. We construct another black algorithm $A_2$ with alphabet $\{\bX, \bOX,\allowbreak \bPOX,\allowbreak \bMX,\allowbreak \bMOX,\allowbreak \bMPOX\}$ that simulates $A_1$ and then maps the output of $A_1$ as follows (see Figure~\ref{fig:speedup-example} for an example):
\begin{align*}
\{\X\} &\mapsto \bX, \\
\{\M\}, \{\M,\X\} &\mapsto \bMX, \\
\{\O\}, \{\O,\X\} &\mapsto \bOX, \\
\{\M,\O\}, \{\M,\O,\X\} &\mapsto \bMOX, \\
\{\P\}, \{\P,\O\}, \{\P,\X\}, \{\P,\O,\X\} &\mapsto \bPOX, \\
\{\M,\P\}, \{\M,\P,\O\}, \{\M,\P,\X\}, \{\M,\P,\O,\X\} &\mapsto \bMPOX.
\end{align*}
Here the intuition is that we first make each set maximal w.r.t.\ \eqref{eq:4states}: for example, whenever we have a set with a $\P$, we also add an $\O$, and whenever we have a set with an $\O$, we also add an $\X$. This results in only six maximal sets, and then we replace e.g.\ the maximal set $\{\M,\O,\X\}$ with the label $\bMOX$.

\subsubsection{Output of $A_2$}

Let us analyze the output of $A_2$ for a black node. Fix a black node $u$ and its neighborhood~$U$. The key property is that regions $D_1, D_2, \dotsc$ in Figure~\ref{fig:speedup} do not overlap---hence if there is some input for $D_1$ that makes $v_1$ output some label $L_1$, and another input for $D_2$ that makes $v_2$ output some label $L_2$, we can also construct an input in which $v_1$ outputs $L_1$ and at the same time $v_2$ outputs $L_2$. We make the following observations:
\begin{enumerate}
    \item There can be at most $x+1$ edges incident to $u$ with a label in $\{\bMX,\bMOX,\bMPOX\}$. If there were $x+2$ such edges, say $e_1, e_2, \dotsc, e_{x+2}$, then it means we could fix $D_1$ such that $A$ outputs $\M$ for $e_1$, and simultaneously fix $D_2$ such that $A$ outputs $\M$ for $e_2$, etc. But this would violate the property that $A$ solves $\Pi_\Delta(x,y)$, as all words of $B_\Delta(x,y)$ contain at most $x+1$ copies of~$\M$.
    \item If there are at least $x+y+1$ edges with a label in $\{\bPOX,\bMPOX\}$, then there has to be at least one edge with a label in $\{\bX,\bMX\}$. Otherwise we could choose $D_i$ so that $A$ outputs $\P$ on $x+y+1$ edges, and there is no $\M$ or $\X$. But all words of $B_\Delta(x,y)$ with at least $x+y+1$ copies of $\P$ contain also at least one $\M$ or $\X$.
\end{enumerate}

\subsubsection{Algorithm $A_3$}

\begin{figure}
    \centering
    \begin{tabular}{lccccccc}
        \toprule
        Edge & $e_1$ & $e_2$ & $e_3$ & $e_4$ & $e_5$ & $e_6$ & $e_7$ \\
        \midrule
        Part of $B_7(1,1)$ & $[\M\X]$ & $[\P\O\X]$ & $[\P\O\X]$ & $[\P\O\X]$ & $[\P\O\X]$ & $[\P\O\X]$ & $[\M\P\O\X]$ \\
        \midrule
        White algorithm $A$ & $\M$ & $\X$ & $\O$ & $\X$ & $\P$ & $\O$ & $\M$ \\
        Black algorithm $A_1$ & $\{\M,\O\}$ & $\{\X\}$ & $\{\O\}$ & $\{\P,\X\}$ & $\{\P\}$ & $\{\P,\O\}$ & $\{\M,\P\}$ \\
        Black algorithm $A_2$ & $\bMOX$ & $\bX$ & $\bOX$ & $\bPOX$ & $\bPOX$ & $\bPOX$ & $\bMPOX$ \\
        Black algorithm $A_3$ & $\bMPOX$ & $\bMX$ & $\bPOX$ & $\bPOX$ & $\bPOX$ & $\bPOX$ & $\bMPOX$ \\
        Black algorithm $A_4$ & $\X$ & $\M$ & $\O$ & $\O$ & $\O$ & $\O$ & $\X$ \\
        \midrule
        Part of $W_7(2,2)$ & $\X$ & $\M$ & $\O$ & $\O$ & $\O$ & $\O$ & $\X$ \\
        \bottomrule
    \end{tabular}
    \Description{}
    \caption{Speedup simulation for $\Pi_7(1,1)$: an example of some possible outputs around a black node.}\label{fig:speedup-example}
\end{figure}

We construct yet another black algorithm $A_3$ that modifies the output of $A_2$ so that we replace labels only with larger labels according to the following partial order, which represents subset inclusion:
\begin{equation}
\begin{tikzcd}[column sep=small]
\bX \arrow[r]\arrow[d] & \bOX \arrow[r]\arrow[d] & \bPOX \arrow[d] \\
\bMX \arrow[r] & \bMOX \arrow[r] & \bMPOX
\end{tikzcd}
\label{eq:6diag}
\end{equation}
There are two cases:
\begin{enumerate}
    \item There are at most $x+y$ copies of $\bPOX$ on edges incident to $u$. We also know that there are at most $x+1$ copies of labels $\{\bMX,\bMOX,\bMPOX\}$. Hence there have to be at least $\Delta-{(x+y)}-{(x+1)} = d-x-1$ copies of labels $\{\bX,\bOX\}$. We proceed as follows:
    \begin{itemize}
        \item Replace all of $\{\bMX,\bMOX\}$ with $\bMPOX$.
        \item Replace some of $\{\bX,\bOX,\bPOX\}$ with $\bMPOX$ so that the total number of $\bMPOX$ is exactly $x+1$.
        \item Replace some of the remaining $\{\bX,\bOX\}$ with $\bPOX$ so that the total number of $\bPOX$ is exactly $x+y$.
        \item Replace all of the remaining $\bX$ with $\bOX$.
    \end{itemize}
    We are now left with exactly $x+1$ copies of $\bMPOX$, exactly $x+y$ copies of $\bPOX$, and exactly $d-x-1$ copies of $\bOX$.
    \item There are more than $x+y$ copies of $\bPOX$. Then we know that there is at least one copy of $\{\bX,\bMX\}$. We first proceed as follows:
    \begin{itemize}
        \item Replace all $\bOX$ with $\bPOX$ and all $\bMOX$ with $\bMPOX$.
        \item If needed, replace one $\bX$ with $\bMX$ so that there is at least one copy of $\bMX$.
        \item Replace all remaining copies of $\bX$ with $\bPOX$.
    \end{itemize}
    At this point we have at least one $\bMX$ and all other labels are in $\{\bMPOX,\bPOX\}$. Furthermore, as we originally had at most $x+1$ copies of $\{\bMX,\bMOX,\bMPOX\}$, and we created at most one new label in this set, we have now got at most $x+2$ copies of $\{\bMX,\bMPOX\}$. Hence we have got at least $d+y-2$ copies of $\bPOX$. We continue:
    \begin{itemize}
        \item Replace some of $\bMX$ with $\bMPOX$ so that the total number of $\bMX$ is exactly $1$.
        \item Replace some of $\bPOX$ with $\bMPOX$ so that the total number of $\bPOX$ is exactly $d+y-2$ and hence the total number of $\bMPOX$ is $x+1$.
    \end{itemize}
\end{enumerate}

Note that we have completely eliminated labels $\bX$ and $\bMOX$ and we are left with the following four labels:
\begin{equation}
\begin{tikzcd}[column sep=small]
\cdot \arrow[r]\arrow[d] & \bOX \arrow[r]\arrow[d] & \bPOX \arrow[d] \\
\bMX \arrow[r] & \cdot \arrow[r] & \bMPOX
\end{tikzcd}
\label{eq:64diag}
\end{equation}

\subsubsection{Algorithm $A_4$}

Finally, we construct an algorithm $A_4$ that maps the output of $A_3$ as follows:
\begin{equation}
\bMX \mapsto \M, \quad
\bOX \mapsto \P, \quad
\bPOX \mapsto \O, \quad
\bMPOX \mapsto \X.
\end{equation}
Note that after this mapping, poset \eqref{eq:64diag} restricted to $\{\M,\P,\O,\X\}$ is the same as poset \eqref{eq:4states}. 
One may have expected $\bOX$ to be renamed to $\O$ and $\bPOX$ to be renamed to $\P$, but we did the opposite. Let us discuss the intuition behind this choice by taking maximal matching as an example. In this problem, $\P$ is used by white notes (that are active) to certify that all black neighbors (that are passive) are matched. Similarly, we can see label $\O$ as a label that can be used by \emph{black} nodes to certify that they have white matched neighbors. Hence, we can see \emph{both} $\P$ and $\O$ as pointers. The renaming that we use allows us to see $\P$ as the pointer of the \emph{active} side and $\O$ as the pointer of the \emph{passive} side, even if the role of active and passive swapped.

\subsubsection{Output of $A_4$}

Let us analyze the output of $A_4$ for a white node $v$. The key observation is that the output of $A$ is contained in the sets that $A_1$ outputs. More precisely, algorithm $A_1$ outputs sets based on its $T-1$-radius neighborhood, and if there is an extension of this neighborhood such that algorithm $A$, executed on the extended neighborhood, outputs some label on the edge corresponding to this extension, then this output label is contained in the set given by $A_1$ on this edge. Since all the possible outputs of $A$ are in $( \M\s\O^{d-1} | \P^d ) \s \O^y\s\X^x$, there are two cases:
\begin{enumerate}
    \item We have a neighborhood in which the output of $A$ at $v$ matches $\M\s\O^{d-1+y}\s\X^x$. Then:
    \begin{itemize}
        \item There is at least one edge incident to $v$ such that the output of $A_1$ contains an $\M$. Therefore the output of $A_2$ on this edge is in $\{\bMX,\bMOX,\bMPOX\}$. As we followed \eqref{eq:6diag}, the output of $A_3$ for this edge is also in $\{\bMX,\bMOX,\bMPOX\}$. After remapping, the output of $A_4$ for this edge is in $\{\M,\X\}$.
        \item There are at least $d-1+y$ edges incident to $v$ such that the output of $A_1$ contains an $\O$. By a similar argument, the output of $A_3$ for each of these edges is in $\{\bOX,\bMOX,\bPOX,\bMPOX\}$, and the output of $A_4$ is hence in $\{\P,\O,\X\}$.
    \end{itemize}
    \item We have a neighborhood in which the output of $A$ at $v$ matches $\P^d\s\O^y\s\X^x$. Then by a similar reasoning:
    \begin{itemize}
        \item We have $d$ edges incident to $v$ such that the output of $A_3$ is in $\{\bPOX,\bMPOX\}$ and hence the output of $A_4$ is in $\{\O,\X\}$.
        \item We have $y$ edges incident to $v$ such that the output of $A_3$ is in $\{\bOX,\bMOX,\bPOX,\bMPOX\}$ and hence the output of $A_4$ is in $\{\P,\O,\X\}$.
    \end{itemize}
\end{enumerate}
Hence the neighborhood of a white node in $A_4$ satisfies
\[
\begin{split}
W^* ={} &[\M\X] \s [\P\O\X]^{d-1+y} \s [\M\P\O\X]^{x} \bigm| [\O\X]^d\s[\P\O\X]^y\s[\M\P\O\X]^{x} \\
={} &\Bigl( [\M\X] \s [\P\O\X]^{d-1} \bigm| [\O\X]^d \Bigr) \s[\P\O\X]^y\s[\M\P\O\X]^{x}.
\end{split}
\]

By construction, the output of a black node in $A_3$ follows one of these patterns:
\begin{enumerate}
    \item $x+1$ copies of $\bMPOX$, $x+y$ copies of $\bPOX$, and $d-x-1$ copies of $\bOX$.
    \item $1$ copy of $\bMX$, $d+y-2$ copies of $\bPOX$, and $x+1$ copies of $\bMPOX$.
\end{enumerate}
Hence after remapping, the output of a black node in $A_4$ satisfies
\[
\begin{split}
B^* &= \P^{d-x-1}\s\O^{x+y}\s\X^{x+1} \bigm| \M\s\O^{d+y-2}\s\X^{x+1} \\
&= \Bigl(\M\s\O^{d-x-2} \bigm| \P^{d-x-1}\bigr)\s\O^{x+y}\s\X^{x+1}.
\end{split}
\]
Hence $A_4$ solves $\Pi^* = (W^*,B^*)$. Now observe that $W^* = B_\Delta(x,y)$ and $B^* = W_\Delta(x+1,y+x)$.

Furthermore, observe that $B_\Delta(x,y) \subseteq B_\Delta(x+1,y+x)$. Hence $A_4$ also solves the problem
\[
\Pi'_\Delta(x+1,y+x) = \big(B_\Delta(x+1,y+x),\, W_\Delta(x+1,y+x)\bigr),
\]
which is otherwise exactly the same as $\Pi_\Delta(x+1,y+x)$, but the roles of the black and white nodes are reversed.

\subsubsection{Conclusion}

We started with the assumption that $A$ is a white algorithm that solves $\Pi_\Delta(x,y)$ in $T$ rounds. We constructed a black algorithm $A_1$ that runs in $T-1$ rounds. Then we applied three steps of local postprocessing to obtain algorithm $A_4$; note that $A_4$ is also a black algorithm with the running time $T-1$, as all postprocessing steps can be done without communication. We observed that $A_4$ solves $\Pi'_\Delta(x+1,y+x)$.

Now if there is a black algorithm that solves $\Pi'_\Delta(x+1,y+x)$ in $T-1$ rounds, we can reverse the roles of the colors and obtain a white algorithm that solves $\Pi_\Delta(x+1,y+x)$ in $T-1$ rounds. We have the following lemma:

\begin{lemma}\label{lem:speedup}
    Assume that there exists a white algorithm that solves $\Pi_\Delta(x,y)$ in $T \ge 1$ rounds in trees, for $\Delta \ge 2x+y+1$. Then there exists a white algorithm that solves $\Pi_\Delta(x+1,y+x)$ in $T-1$ rounds in trees.
\end{lemma}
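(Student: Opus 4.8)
The plan is to take the assumed white algorithm $A$ for $\Pi_\Delta(x,y)$, push it through the chain of constructions $A\to A_1\to A_2\to A_3\to A_4$ outlined above, and finally exchange the roles of the two colors. Write $d=\Delta-x-y$ throughout. First I build the black algorithm $A_1$ of running time $T-1$: a black node $u$ gathers its radius-$(T-1)$ ball $U$, and for each white neighbor $v_i$ it forms the set $S_i$ of all labels that $A$ may place on the edge $e_i=\{u,v_i\}$ as the port numbering of the unseen region $D_i=V_i\setminus U$ ranges over all possibilities, and it outputs $S_i$ on $e_i$. The one structural fact that makes this and everything downstream go through is that $D_1,\dots,D_\Delta$ are pairwise disjoint in the tree, so port numberings that each force a ``bad'' label on a different edge can be glued into a single global port numbering forcing all of them at once. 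Making this gluing argument precise -- and it is exactly here that the tree hypothesis enters -- is one of the two places I expect to need the most care.

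Next I pass to $A_2$ by closing each $S_i$ upward under the poset \eqref{eq:4states} and reading off one of the six labels $\bX,\bOX,\bPOX,\bMX,\bMOX,\bMPOX$. Combining the disjointness of the $D_i$ with the fact that $A$'s output around any black node is a word of $B_\Delta(x,y)$, I get the two counting constraints on $A_2$'s output at $u$: (i) at most $x+1$ incident edges carry a label containing $\M$, for otherwise $x+2$ of them could be simultaneously forced to $\M$, producing $x+2$ copies of $\M$ around $u$, which no word of $B_\Delta(x,y)$ admits; and (ii) if at least $x+y+1$ incident edges carry a label containing $\P$, then at least one carries a label in $\{\bX,\bMX\}$, for otherwise every incident edge has $\P$ or $\O$ available, so I could force at least $x+y+1$ copies of $\P$ with no incident edge getting an $\M$ or an $\X$, contradicting that every word of $B_\Delta(x,y)$ with that many $\P$'s contains an $\M$ or an $\X$.

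Then $A_3$ rewrites $A_2$'s output, only ever replacing a label by a larger one in \eqref{eq:6diag}, via a case split on whether there are at most $x+y$ copies of $\bPOX$ or more. In the first case constraints (i)--(ii) force at least $d-x-1$ copies of $\{\bX,\bOX\}$ -- this is the only place the hypothesis $\Delta\ge 2x+y+1$, equivalently $d\ge x+1$, is used -- and I redistribute to the canonical pattern $x+1$ copies of $\bMPOX$, $x+y$ of $\bPOX$, $d-x-1$ of $\bOX$; in the second case constraint (ii) yields a copy of $\{\bX,\bMX\}$ and I redistribute to $1$ copy of $\bMX$, $d+y-2$ of $\bPOX$, $x+1$ of $\bMPOX$. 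Since $\bX$ and $\bMOX$ are now gone, $A_4$ remaps the surviving four labels by $\bMX\mapsto\M$, $\bOX\mapsto\P$, $\bPOX\mapsto\O$, $\bMPOX\mapsto\X$.

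It remains to identify the problem $A_4$ solves, the second delicate bookkeeping step. Around a white node $v$, using that $A_3$ only moved labels up in \eqref{eq:6diag} (so the label $A$ itself chose on $e_i$ still lies in the coarsened set there), I verify that the output of $A_4$ always matches $W^*=\bigl([\M\X][\P\O\X]^{d-1}\bigm|[\O\X]^d\bigr)[\P\O\X]^y[\M\P\O\X]^x=B_\Delta(x,y)$; around a black node, the two canonical patterns of $A_3$ remap to $B^*=\bigl(\M\,\O^{d-x-2}\bigm|\P^{d-x-1}\bigr)\O^{x+y}\X^{x+1}=W_\Delta(x+1,y+x)$. Thus $A_4$ is a black algorithm of running time $T-1$ (all postprocessing is communication-free) solving $(B_\Delta(x,y),W_\Delta(x+1,y+x))$; since $B_\Delta(x,y)\subseteq B_\Delta(x+1,y+x)$, it also solves the color-reversed problem $\Pi'_\Delta(x+1,y+x)=(B_\Delta(x+1,y+x),W_\Delta(x+1,y+x))$. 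Exchanging the two colors then gives a white algorithm solving $\Pi_\Delta(x+1,y+x)$ in $T-1$ rounds, as claimed.
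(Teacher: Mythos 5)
Your proposal reproduces the paper's own proof essentially verbatim: the chain $A\to A_1\to A_2\to A_3\to A_4$, the two counting observations on $A_2$'s output, the case split by the number of $\bPOX$ labels, the remapping to $\{\M,\P,\O,\X\}$, the identification $W^*=B_\Delta(x,y)$ and $B^*=W_\Delta(x+1,y+x)$, the inclusion $B_\Delta(x,y)\subseteq B_\Delta(x+1,y+x)$, and the final color reversal. The reasoning is correct and takes the same route as the paper.
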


By a repeated application of \lemmaref{lem:speedup}, we obtain the following corollary:

\begin{corollary}\label{cor:speedup}
    Assume that there exists a white algorithm that solves $\Pi_\Delta(x,y)$ in $T$ rounds in trees, for $\Delta \ge x+y+T(x+1+(T-1)/2)$. Then there is a white algorithm that solves $\Pi_\Delta(x',y')$ in $0$ rounds in trees for
    \begin{align*}
    x' &= x + T, \\
    y' &= y + T\bigl(x + (T-1)/2\bigr).
    \end{align*}
\end{corollary}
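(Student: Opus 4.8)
The plan is to derive this purely by iterating \lemmaref{lem:speedup} exactly $T$ times, tracking how the two parameters evolve and checking that the degree hypothesis survives every step. Concretely, set $x_0 = x$, $y_0 = y$, and $T_0 = T$; then \lemmaref{lem:speedup}, whenever applicable, turns a white algorithm for $\Pi_\Delta(x_i, y_i)$ running in $T_i$ rounds into a white algorithm for $\Pi_\Delta(x_{i+1}, y_{i+1})$ running in $T_{i+1} = T_i - 1$ rounds, where $x_{i+1} = x_i + 1$ and $y_{i+1} = y_i + x_i$. Solving these recurrences gives the closed forms $x_i = x + i$ and $y_i = y + \sum_{j=0}^{i-1}(x+j) = y + ix + \binom{i}{2}$, so after $i = T$ steps we land on $T_T = 0$ together with $x_T = x + T$ and $y_T = y + Tx + \binom{T}{2} = y + T\bigl(x + (T-1)/2\bigr)$, which are exactly the claimed $x'$ and $y'$.

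What remains is to verify that \lemmaref{lem:speedup} is legitimately applicable at each of the steps $i = 0, 1, \dotsc, T-1$. It requires $T_i \ge 1$, which holds since $T_i = T - i \ge 1$ for $i \le T-1$, and it requires $\Delta \ge 2x_i + y_i + 1$. Since $x_i$ and $y_i$ are nondecreasing in $i$, the most restrictive of these inequalities is the one at $i = T-1$; substituting the closed forms and simplifying (using $\sum_{j=0}^{T-2}(x+j) = (T-1)x + \binom{T-1}{2}$) shows that $2x_{T-1} + y_{T-1} + 1$ equals precisely $x + y + T\bigl(x + 1 + (T-1)/2\bigr)$, i.e.\ exactly the bound assumed in the statement. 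Hence the single hypothesis $\Delta \ge x + y + T\bigl(x + 1 + (T-1)/2\bigr)$ implies $\Delta \ge 2x_i + y_i + 1$ for every $i \le T-1$, and the chain of $T$ applications of \lemmaref{lem:speedup} goes through to produce the desired $0$-round algorithm.

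Since none of these steps is genuinely deep, the only thing requiring care is the bookkeeping in the previous paragraph: one must be sure that the binding constraint among the $T$ intermediate instances is the last one, and that the coefficient $x + 1 + (T-1)/2$ appearing in the corollary is the exact value — not merely an upper bound — produced by that worst case, both of which reduce to evaluating the arithmetic sums above. Alternatively, the same argument can be phrased as an induction on $T$: the base case $T = 0$ is trivial (the given algorithm already runs in $0$ rounds, and $x' = x$, $y' = y$), and for $T \ge 1$ one applies \lemmaref{lem:speedup} once — noting $\Delta \ge x + y + T(x + 1 + (T-1)/2) \ge 2x + y + 1$ — to reach $\Pi_\Delta(x+1, y+x)$ in $T-1$ rounds, then invokes the induction hypothesis after checking that the corollary's degree bound rewritten for the pair $(x+1, y+x)$ and time $T-1$ reproduces the very same bound. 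I would present whichever phrasing keeps the index juggling cleanest; the iterative version seems marginally more transparent.
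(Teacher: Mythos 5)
Your proposal is correct and follows exactly the route the paper takes: the paper states \corollaryref{cor:speedup} only with the remark that it follows ``by a repeated application of \lemmaref{lem:speedup},'' and your iteration, closed-form solution of the recurrences $x_{i+1}=x_i+1$, $y_{i+1}=y_i+x_i$, and the verification that the binding degree constraint occurs at $i=T-1$ and evaluates to $x+y+T\bigl(x+1+(T-1)/2\bigr)$ is precisely the bookkeeping the authors had in mind. The arithmetic checks out: $2x_{T-1}+y_{T-1}+1 = (T+1)x + y + T(T+1)/2 = x+y+T\bigl(x+1+(T-1)/2\bigr)$.
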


\subsection{Base case}

Now to make use of \corollaryref{cor:speedup}, it is sufficient to show that there are problems in the family $\Pi_\Delta(x,y)$ that cannot be solved with $0$-round white algorithms:

\begin{lemma}\label{lem:base}
    There is no white algorithm that solves $\Pi_\Delta(x,y)$ in $0$ rounds in trees when $\Delta \ge x+y+2$.
\end{lemma}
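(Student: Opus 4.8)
The plan is to exploit the fact that a $0$-round algorithm is essentially powerless. In the port-numbering model, after zero communication rounds a white node knows only that it is white and that its ports are numbered $1,\dots,\Delta$; this information is identical at every white node, so a deterministic $0$-round white algorithm must output one and the same labelling on the incident edges of every white node, assigning a fixed symbol to each fixed port. Since the hypothesis gives $\Delta = x+y+2$, i.e.\ $d = \Delta - x - y = 2$, we have $W_\Delta(x,y) = \bigl( \M\s\O \bigm| \P^2 \bigr)\s\O^y\s\X^x$, so this common word is either $w_A = \M\s\O^{y+1}\s\X^x$ or $w_B = \P^2\s\O^y\s\X^x$. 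I would record the one observation that matters: $w_A$ contains the symbol $\M$, and $w_B$ contains the symbol $\P$.

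Next I would fix a black node $u$ in the infinite $\Delta$-regular tree with white neighbours $v_1,\dots,v_\Delta$ and choose the port numbering adversarially around $u$. If every white node outputs $w_A$, I arrange that the edge $\{u,v_i\}$ sits at the (unique) port where $v_i$ places the label $\M$; if every white node outputs $w_B$, I make $\{u,v_i\}$ sit at a port where $v_i$ places a $\P$. This is possible precisely because $w_A$ contains an $\M$ and $w_B$ contains a $\P$, and nothing elsewhere in the tree constrains these choices (in a white algorithm the label on $\{u,v_i\}$ depends only on $v_i$'s fixed output word and on which port of $v_i$ that edge occupies). As a result $u$ sees the multiset $s^\Delta$ with $s \in \{\M,\P\}$ on its $\Delta$ incident edges.

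It then remains to verify $s^\Delta \notin B_\Delta(x,y)$ for $s\in\{\M,\P\}$, using $B_\Delta(x,y) = \bigl( [\M\X]\s[\P\O\X] \bigm| [\O\X]^2 \bigr)\s[\P\O\X]^y\s[\M\P\O\X]^x$ (that is, \eqref{eq:pixyd} specialized to $d=2$). For $s=\P$ this is immediate: in the first alternative the lone $[\M\X]$ position cannot carry a $\P$, and in the second alternative neither of the two $[\O\X]$ positions can, so there is no way to place $\Delta$ copies of $\P$. For $s=\M$ I would just count the positions that admit an $\M$: in the first alternative only the single $[\M\X]$ position and the $x$ wildcard positions do ($x+1$ in all), and in the second alternative only the $x$ wildcard positions do; since $\Delta = x+y+2 > x+1$, in neither alternative can all $\Delta$ edges carry $\M$. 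Hence $u$'s incident labels are infeasible, contradicting that the algorithm solves $\Pi_\Delta(x,y)$, which proves the lemma.

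The delicate point — and the only one — is the very first step: one must notice that a $0$-round deterministic algorithm is constant across white nodes, so that only the two uniform scenarios ``all $w_A$'' and ``all $w_B$'' need to be defeated. If labellings mixing the two forms were allowed, the simple routing attack would break down (for instance $\M^k\P^{\Delta-k}$ with $2\le k\le x+1$ does lie in $B_\Delta(x,y)$), so it is essential that no $0$-round algorithm can produce such a mixture. It is also the hypothesis $\Delta = x+y+2$ (rather than $\Delta = x+y+1$) that makes the $s=\M$ count come out strictly below $\Delta$ in the first alternative of $B_\Delta(x,y)$; with $d=1$ and $y=0$ that alternative degenerates to $[\M\X]\s[\M\P\O\X]^x$, which does accept $\M^\Delta$, matching the fact that $\Pi_\Delta(x,0)$ is then $0$-round solvable.
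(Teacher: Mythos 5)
Your proof is correct and follows essentially the same approach as the paper's: observe that a $0$-round deterministic white algorithm is a fixed port-to-symbol assignment, split into the two possible words $\M\s\O^{y+1}\s\X^x$ and $\P^2\s\O^y\s\X^x$, adversarially wire a black node so that every incident edge carries $\M$ (resp.\ $\P$), and check that $\M^\Delta$ and $\P^\Delta$ lie outside $B_\Delta(x,y)$. Your added remarks (why mixed labellings would evade the attack, and why $\Delta=x+y+1$ would not suffice) are accurate commentary but do not change the argument.
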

\begin{proof}
    Since by increasing $x$ and $y$ the problem does not get harder, it is enough to prove the statement for $\Delta = x+y+2$.
    As we are looking at deterministic algorithms in the port-numbering model, in a $0$-round white algorithm all white nodes produce the same output. By definition, the output has to be a permutation of some word $w \in W_\Delta(x,y)$. Now there are only two possibilities:
    \begin{enumerate}
        \item $w = \M\s\O^{y+1}\s\X^x$: There is some port $p$ that is labeled with $\M$. Now consider a black node $u$ such that for all white neighbors $v$ of $u$, port $p$ of $v$ is connected to $u$. Then all $\Delta = x+y+2$ edges incident to $u$ are labeled with an $\M$. However, by definition of $B_\Delta(x,y)$, at most $x+1$ edges incident to a black node can be labeled with an~$\M$.
        \item $w = \P^2\s\O^y\s\X^x$: By a similar reasoning, we can find a black node $u$ such that all $\Delta = x+y+2$ edges incident to $u$ are labeled with a $\P$. However, at most $x+y+1$ edges incident to a black node can be labeled with a~$\P$. \qedhere
    \end{enumerate}
\end{proof}

\subsection{Amplifying through \texorpdfstring{\boldmath $k$}{k}-matchings}

A straightforward application of \corollaryref{cor:speedup} and \lemmaref{lem:base} already gives a new lower bound---it just is not tight yet:

\begin{theorem}\label{thm:simple}
    Any deterministic algorithm that finds a maximal matching in the port-numbering model in $2$-colored graphs requires $\Omega(\sqrt{\Delta})$ rounds.
\end{theorem}
\begin{proof}
    Any algorithm that runs in $T$ rounds implies a white algorithm that runs in $O(T)$ rounds, and naturally a general algorithm also has to solve the problem correctly in trees.
    Let us assume for a contradiction that for any constant $c > 0$ and any positive integer $\Delta_0$ there exists $\Delta > \Delta_0$ such that there exists a white deterministic algorithm that finds a maximal matching in the port numbering model in $2$-colored graphs in less than $c \sqrt{\Delta}$ rounds.
    
     Recall that $\Pi_\Delta(0,0)$ is equal to the maximal matching problem. If we could solve it with a white algorithm in less than $c \sqrt{\Delta}$ rounds, then by \corollaryref{cor:speedup} we could also solve $\Pi_\Delta(x',y')$ in $0$ rounds, where $x' = c \sqrt{\Delta}$ and $y' = c \sqrt{\Delta} (c \sqrt{\Delta}-1)/2$. For small enough $c$ and large enough $\Delta_0$, $x'+y'+2 = c \sqrt{\Delta} + c \sqrt{\Delta} (c \sqrt{\Delta}-1)/2 \le \Delta$, which contradicts \lemmaref{lem:base}.
\end{proof}

However, we can easily amplify this and get a linear-in-$\Delta$ lower bound. Recall $k$-matchings from \defref{def:k-matching}. We make the following observation:

\begin{lemma}\label{lem:k-matching-pi}
    If there is an algorithm that finds a $k$-matching in $T$ rounds, there is also a white algorithm that solves $\Pi_\Delta(k-1,0)$ in $T+O(1)$ rounds.
\end{lemma}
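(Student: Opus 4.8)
The plan is to run the assumed $k$-matching algorithm for $T$ rounds, spend $O(1)$ further rounds so that every node learns exactly which of its incident edges lie in the resulting matching $M$ (and so that the final output is produced by the white nodes), and then have each white node relabel its incident edges locally over $\Sigma = \{\M,\P,\O,\X\}$. Write $d = \Delta - k + 1$ and assume $\Delta$ is large enough that $d \ge 1$, i.e.\ $\Delta \ge k$; then $W_\Delta(k-1,0) = (\M\O^{d-1} \mid \P^d)\X^{k-1}$ and $B_\Delta(k-1,0) = ([\M\X][\P\O\X]^{d-1} \mid [\O\X]^d)[\M\P\O\X]^{k-1}$. The relabeling rule would be: if a white node $v$ is incident to $m \ge 1$ edges of $M$ (so $1 \le m \le k$ by \defref{def:k-matching}), it labels one of them $\M$, the other $m-1$ of them $\X$, labels $k-m$ further incident edges $\X$, and labels the remaining $\Delta - k$ incident edges $\O$, producing the word $\M\O^{d-1}\X^{k-1} \in W_\Delta(k-1,0)$; if $v$ is incident to no edge of $M$, it labels $k-1$ arbitrary incident edges $\X$ and the remaining $d$ edges $\P$, producing $\P^d\X^{k-1} \in W_\Delta(k-1,0)$. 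Thus the white constraint holds by construction (this uses $\Delta \ge k$ so there are enough non-$M$-edges to pad with $\X$).

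Two features of this rule would drive the black-side verification: every edge of $M$ is labeled $\M$ or $\X$ by its white endpoint, and the symbol $\P$ is emitted only by white nodes that are incident to no edge of $M$. So, fixing a black node $u$: if $u$ is incident to no edge of $M$, then property (2) of a $k$-matching forces every white neighbor of $u$ to be incident to some edge of $M$, hence no neighbor emits a $\P$; since $\M$ occurs only on edges of $M$ and $u$ has none, every label at $u$ lies in $\{\O,\X\}$, so the word at $u$ lies in $[\O\X]^\Delta \subseteq [\O\X]^d[\M\P\O\X]^{k-1}$. If instead $u$ is incident to $m' \ge 1$ edges of $M$, then $\M$ appears at $u$ at most $m' \le k$ times (only on the $m'$ edges of $M$ at $u$), and at least one of those $m'$ edges carries an $\M$ or an $\X$, so $u$'s word contains a symbol from $\{\M,\X\}$. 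I would then observe that any multiset of $\Delta$ symbols with at most $k$ copies of $\M$ and at least one symbol in $\{\M,\X\}$ lies in $[\M\X][\P\O\X]^{d-1}[\M\P\O\X]^{k-1}$: put a copy of $\M$ (or, failing that, of $\X$) in the first slot, the remaining at most $k-1$ copies of $\M$ into the length-$(k-1)$ suffix block, and all remaining symbols — none of which is $\M$ — into the length-$(d-1)$ middle block together with any still-free suffix slots. Hence the word at $u$ lies in $B_\Delta(k-1,0)$, and the constructed white algorithm solves $\Pi_\Delta(k-1,0)$ in $T+O(1)$ rounds.

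The only real subtlety is the black-node verification. One has to notice that an unmatched black node forces all of its white neighbors to be matched — this is exactly where $k$-matching property (2) enters, and it is what rules out stray $\P$'s incident to such a node — and that a white node may be incident to as many as $k$, rather than $k-1$, edges of $M$. The latter is precisely why the target problem is $\Pi_\Delta(k-1,0)$ and not $\Pi_\Delta(k,0)$: the single non-wildcard $\M$-slot in $W_\Delta(k-1,0)$ absorbs the $k$-th matched edge at a white node while the $k-1$ wildcards absorb the remaining matched edges, and symmetrically at a black node the $[\M\X]$ prefix plus the length-$(k-1)$ wildcard suffix of $B_\Delta(k-1,0)$ accommodate up to $k$ incident $\M$'s.
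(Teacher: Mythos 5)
Your proposal is correct and follows essentially the same approach as the paper: run the $k$-matching algorithm, use $O(1)$ extra rounds so each white node knows which of its edges are in $M$, then apply the same relabeling rule (one $\M$ on a chosen matched edge, pad to exactly $k-1$ copies of $\X$, $\O$ on the rest for matched whites; $\P^d\X^{k-1}$ for unmatched whites), and verify the black constraint by splitting on whether the black node touches $M$. Your black-side verification is phrased slightly more abstractly — via the invariant ``at most $k$ copies of $\M$ and at least one symbol in $\{\M,\X\}$'' rather than by directly partitioning incident edges into $M$-edges (all in $[\M\X]$) and non-$M$-edges (all in $[\P\O\X]$) as the paper does — but this is the same underlying argument and both rely on exactly the same two facts: $\M$ occurs only on edges of $M$, and $\P$ is emitted only by white nodes with no incident $M$-edge, with $k$-matching property~(2) ruling out $\P$'s at an unmatched black node.
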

\begin{proof}
    Consider a $k$-matching $M$, and consider a white node~$u$.
    \begin{enumerate}
        \item If $u$ is incident to at least one edge of $M$, then pick arbitrarily one $e \in M$ incident to $u$ and label $e$ with an $\M$. Label all other edges $\{u,v\} \in M \setminus \{e\}$ with an $\X$; note that this results in at most $k-1$ incident edges with an $\X$. Then label all other edges $\{u,v\} \notin M$ with $\O$ or $\X$ such that the total number of incident edges with an $\X$ is exactly $k-1$.
        \item Otherwise $u$ is not incident to any edge of $M$. In this case pick arbitrarily $k-1$ incident edges, label them with an $\X$, and label all other incident edges with a $\P$.
    \end{enumerate}
    At this point by construction white nodes satisfy $W_\Delta(k-1,0)$. It remains to be checked that black nodes satisfy $B_\Delta(k-1,0)$; to this end, consider a black node $v$. There are two cases:
    \begin{enumerate}
        \item Node $v$ is incident to at least one edge of $M$: All edges in $M$ are labeled with $\M$ or $\X$ and all edges not in $M$ are labeled with $\P$, $\O$, or $\X$. We have got at least one edge incident to $v$ labeled with $\M$ or $\X$, at most $k-1$ additional incident edges labeled with $\M$ or $\X$, and everything else in $[\P\O\X]$, which forms a word in \[[\M\X]\s[\P\O\X]^{\Delta-k} \s[\M\P\O\X]^{k-1}.\]
        \item Node $v$ is not incident to an edge of $M$: Then by definition all white neighbors $u$ of $v$ were incident to at least one edge of $M$, and hence all incident edges are labeled with $\O$ or $\X$, which forms a word~in
        \[[\O\X]^{\Delta-k+1} \s[\M\P\O\X]^{k-1}.\qedhere\]
    \end{enumerate}
\end{proof}

Now the same reasoning as \theoremref{thm:simple} gives the following result:

\begin{theorem}\label{thm:k-matching}
    Any deterministic algorithm that finds an $O(\sqrt{\Delta})$-matching in the port-numbering model in $2$-colored graphs requires $\Omega(\sqrt{\Delta})$ rounds.
\end{theorem}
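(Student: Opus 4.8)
The plan is to run the same argument as in \theoremref{thm:simple}, but to start the speedup machinery from a genuinely weaker problem in the family $\Pi_\Delta(\cdot,\cdot)$, namely the one that a $k$-matching already solves. Suppose, for contradiction, that some deterministic algorithm in the port-numbering model finds a $k$-matching in $T$ rounds in $2$-colored graphs, where $k = k(\Delta) = O(\sqrt{\Delta})$ and $T = T(\Delta) = o(\sqrt{\Delta})$. Such an algorithm in particular produces a valid $k$-matching on every (infinite) $\Delta$-regular two-colored tree, so by \lemmaref{lem:k-matching-pi} we obtain a white algorithm that solves $\Pi_\Delta(k-1,0)$ in $T' := T + O(1)$ rounds on trees; note that $T'$ is still $o(\sqrt{\Delta})$.

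Now apply \corollaryref{cor:speedup} with starting parameters $(x,y) = (k-1,0)$ and running time $T'$. This produces a $0$-round white algorithm for $\Pi_\Delta(x',y')$ on trees, where $x' = (k-1) + T'$ and $y' = T'\bigl((k-1) + (T'-1)/2\bigr)$, subject to the precondition $\Delta \ge x + y + T'\bigl(x + 1 + (T'-1)/2\bigr)$. A short computation gives $x' + y' = (k-1) + T'\bigl(k + (T'-1)/2\bigr)$, which is exactly the right-hand side of that precondition; and since $k = O(\sqrt{\Delta})$ and $T' = o(\sqrt{\Delta})$, every term of $x' + y'$ is $o(\Delta)$. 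Hence for all sufficiently large $\Delta$ the precondition is satisfied and $\Delta \gg x' + y' + 2$, so exactly as in the proof of \theoremref{thm:simple} the resulting $0$-round white algorithm for $\Pi_\Delta(x',y')$ contradicts \lemmaref{lem:base}. Therefore no such $k$-matching algorithm exists, which is the claim.

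The only ingredient beyond \theoremref{thm:simple} is \lemmaref{lem:k-matching-pi}, which is already proved, plus the bookkeeping that the parameters stay sublinear in $\Delta$. That bookkeeping is the one place where the two $\sqrt{\Delta}$'s have to be balanced: starting from $\Pi_\Delta\bigl(\Theta(\sqrt{\Delta}),0\bigr)$ and iterating the speedup $\Theta(\sqrt{\Delta})$ times would give $y' = \Theta(\sqrt{\Delta})\cdot\Theta(\sqrt{\Delta}) = \Theta(\Delta)$ in the worst case, which is already too large for \lemmaref{lem:base}; but because $T(\Delta)/\sqrt{\Delta}\to 0$, the hidden constants are eventually small enough, so the contradiction is obtained by passing to a single sufficiently large $\Delta$ (with $k$ and $T$ evaluated there), not to a fixed instance. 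I do not expect any serious obstacle here — it is a direct composition of \lemmaref{lem:k-matching-pi} with \corollaryref{cor:speedup} — beyond keeping track of the fact that $k$ and $T$ are functions of $\Delta$.
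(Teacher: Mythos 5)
Your proof is correct and follows essentially the same approach as the paper: compose \lemmaref{lem:k-matching-pi} (to convert a $k$-matching algorithm into a white algorithm for $\Pi_\Delta(k-1,0)$) with \corollaryref{cor:speedup}, check that the resulting $x'+y'$ is $o(\Delta)$, and contradict \lemmaref{lem:base}. The parameter bookkeeping you spell out is exactly what the paper's one-line proof compresses into ``$\Delta - x' - y' = \Delta - o(\Delta) \gg 2$.''
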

\begin{proof}
	For a contradiction, assume that the theorem statement is not true, which, by Lemma~\ref{lem:k-matching-pi}, implies that there is a $T$-round algorithm that solves $\Pi_\Delta(x,y)$, for $x = O(\sqrt{\Delta})$, $y = 0$, and $T = o(\sqrt{\Delta})$. By \corollaryref{cor:speedup}, this implies that there is a $0$-round algorithm for $\Pi_\Delta(x',y')$, where $x' = O(\sqrt{\Delta})$ and $y' = o(\Delta)$. Note that, for $\Delta$ large enough, $\Delta - x' - y'$ is much larger than $2$. But by \lemmaref{lem:base}, since  $\Delta \ge x' + y' + 2$, $\Pi_\Delta(x',y')$ cannot be solved in $0$ rounds, and hence we reached a contradiction.
\end{proof}

Now it turns out that \theoremref{thm:k-matching} is tight, and we can use it to prove a tight lower bound for maximal matchings:

\begin{theorem}\label{thm:full}
    Any deterministic algorithm that finds a maximal matching in the port-numbering model in $2$-colored graphs requires $\Omega(\Delta)$ rounds.
\end{theorem}
\begin{proof}
    Assume that $A$ is an algorithm that finds a maximal matching in $f(\Delta) = o(\Delta)$ rounds. We use $A$ to construct algorithm $A'$ that finds an $O(\sqrt{\Delta})$-matching in $o(\sqrt{\Delta})$ rounds; then we can apply \theoremref{thm:k-matching} to see that $A$ cannot exist.
    
    \begin{figure}
        \centering
        \includegraphics[page=5,scale=0.9]{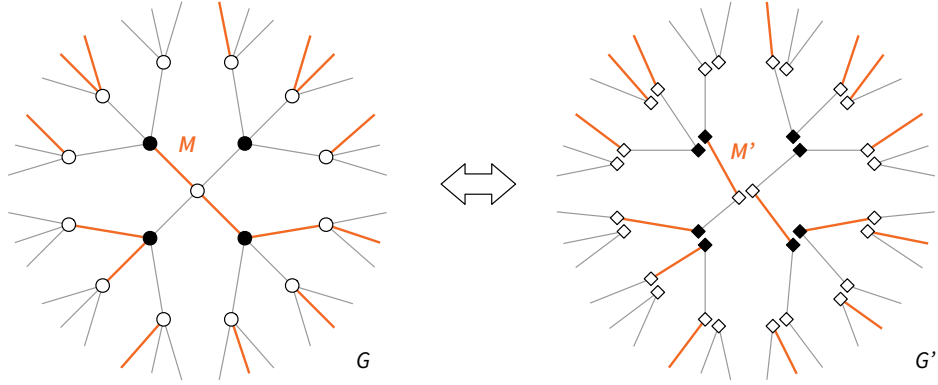}
        \Description{}
        \caption{Splitting nodes of degree $\Delta=4$ in $\sqrt{\Delta} = 2$ mininodes, each of degree $\sqrt{\Delta} = 2$. A maximal matching $M'$ of $G'$ defines a $2$-matching $M$ in $G$.}\label{fig:split}
    \end{figure}
    
    Algorithm $A'$ constructs a new virtual graph $G'$ by splitting each node $u$ of $G$ arbitrarily in $O(\sqrt{\Delta})$ \emph{mininodes} $u_1, u_2,\dotsc$, each of degree $O(\sqrt{\Delta})$; see \figureref{fig:split}. Then $A'$ simulates $A$ in graph $G'$ to find a maximal matching $M'$ of $G'$. Now $M'$ defines an $O(\sqrt{\Delta})$-matching $M$ in the original graph $G$. To see this, note that each mininode is incident to at most one edge of $M'$, and hence each original node is incident to $O(\sqrt{\Delta})$ edges of $M$. Furthermore, if a node $u$ is not incident to any edge of $M$, then all neighbors of each mininode are matched in $M'$, and hence all original neighbors of $u$ are incident to at least one edge of $M$.
    
    To conclude the proof, note that the maximum degree of the virtual graph $G'$ is $\Delta' = O(\sqrt{\Delta})$, and hence the simulation of $A$ in $G'$ completes in $f(\Delta') = o(\sqrt{\Delta})$ rounds.
\end{proof}

\subsection{Bounds for finite graphs}

We now prove the existence of a certain family of graphs, that will be later used to apply the ideas explained in Section~\ref{sec:treesvsgraphs}.
\begin{lemma}\label{lem:regulargraphs}
    There exists a constant $c>0$ such that for any $n$ large enough and any $\Delta$ satisfying that at least one of $n$ and $\Delta$ is even and $2 \le \Delta < n/10$, there exists a $\Delta$-regular connected graph $G = (V,E)$ where $|V|=n$  that contains a node $v \in V$ such that the radius-$t$ neighborhood of $v$ is isomorphic to a $\Delta$-regular tree, for some $t \ge c \log_\Delta n$.
\end{lemma}
\begin{proof}
    For $\Delta=2$ the statement follows by considering a cycle of $n$ nodes, so in the following we assume $\Delta>2$. A $\Delta$-regular balanced tree of depth $t$ contains $f_\Delta(t)=(\Delta(\Delta-1)^t-2)/(\Delta-2)$ nodes. We choose the maximal $t$ such that $n-f_\Delta(t)-(\Delta-1)^{t}>\Delta$. Note that we always have $t\ge1$, as $n>10\Delta$ and $\Delta+f_\Delta(1)+(\Delta-1)^1=3\Delta$. We build a $\Delta$-regular balanced tree $T$ of depth $t$. Tree $T$ has $\Delta(\Delta-1)^{t-1}$ leaves of degree 1, and all the other nodes are of degree $\Delta$. Finally we turn $T$ into a $\Delta$-regular graph: we add $(\Delta-1)^{t}$ new nodes, called \emph{bottom nodes}, and connect those nodes with the leaves; each bottom node is connected to $\Delta$ distinct leaf nodes, and each leaf node is connected to distinct $\Delta-1$ bottom nodes.

    So far we have got $f_\Delta(t)+(\Delta-1)^{t}$ nodes in graph $T$; we are still missing $N = n-f_\Delta(t)-(\Delta-1)^{t} > \Delta$ nodes. Notice that if $\Delta$ is odd, $f_\Delta(t)$ and $(\Delta-1)^{t}$ are even, meaning that $N$ is even. We now use the fact that for any $\Delta \ge 2$ and $N > \Delta$ such that one of them is even, a $\Delta$-regular connected graph $H$ of size $N$ always exists: arrange the $N$ nodes in a cycle; if $\Delta$ is even, node $i$ is connected to its $\Delta/2$ predecessors and successors; if $\Delta$ is odd, $N$ is even and you can also connect $i$ to the node diametrically opposed.

    So far we have got two connected components, $T$ and $H$, both of them $\Delta$-regular graphs with $n$ nodes in total. In order to connect $T$ and $H$, we pick an arbitrary edge $\{u,v\}$ of $H$ and an arbitrary edge $\{w,x\}$ adjacent to a bottom node $x$ of $T$. We remove such edges and add the edges $\{u,w\}$ and $\{v,x\}$.

    The obtained graph $G$ contains exactly $n$ nodes, it is $\Delta$-regular, and the $t$-radius ball around the node corresponding to the root of $T$ is a $\Delta$-regular tree. Moreover, we can notice that $t = \Omega(\log_\Delta f_\Delta(t))$. By the choice of $t$, we have $f_\Delta(t)-(\Delta-1)^{t}< n-\Delta < f_\Delta(t+1)-(\Delta-1)^{t+1}$, hence $t= \Omega(\log_\Delta n)$.
\end{proof}

\begin{lemma}\label{cor:full}
    There exists a constant $c>0$ such that for any $n$ and $\Delta$ large enough satisfying that at least one of $n$ and $\Delta$ is even and $\Delta < n/10$,
    any deterministic algorithm that finds a maximal matching in $\Delta$-regular $2$-colored graphs of $n$ nodes requires at least $c \min\{\Delta,\log_\Delta n\}$ rounds in the port-numbering model.
\end{lemma}
\begin{proof}
    We can apply the idea explained in Section~\ref{sec:treesvsgraphs} to convert the lower bound of \theoremref{thm:full}, presented for infinite $\Delta$-regular trees, into a lower bound for $\Delta$-regular graphs containing $n$ nodes. 
    
    Let us assume for a contradiction that for any constant $c > 0$ and any positive integers $n_0$ and $\Delta_0$ there exist $n > n_0$ and $\Delta > \Delta_0$ satisfying $\Delta < n/10$ such that there is a deterministic algorithm that finds a maximal matching in $\Delta$-regular $2$-colored graphs of $n$ nodes in less than $c \min\{\Delta,\log_\Delta n\}$ rounds.

    We fix $c$ to be the minimum of $0.1$, the constant $c$ of Lemma~\ref{lem:regulargraphs}, and the constant hidden in the $\Omega$-notation of \theoremref{thm:full}. We will later fix $n_0$ and $\Delta_0$ to be large enough constants.
    
    We run this algorithm on an infinite $\Delta$-regular tree, by claiming that the number of nodes is $n$. Since the running time of the algorithm is less than $c \log_\Delta n$, the algorithm sees at most $\Delta^{c \log_\Delta n} = n^c < n$ nodes, and hence does not detect that we lied about the size of the graph. Since $T $ is also less than $c \Delta$, by \theoremref{thm:full} the algorithm must fail in some neighborhood, if $\Delta$ is large enough. We fix $\Delta_0$ to make $\Delta$ satisfy such a requirement.
    
    Now, we fix $n_0$ large enough and apply Lemma~\ref{lem:regulargraphs} to construct a $\Delta$-regular graph of $n$ nodes containing a node with the same neighborhood in which the algorithm failed, and the algorithm must fail in such a graph as well, a contradiction.
\end{proof}

Note that $\Delta$ and $\log_\Delta n$ become roughly the same by setting $\Delta \approx \log n / \log \log n$. Hence, if we consider $\Delta$ to be an upper bound for the maximum degree, we directly get the following.

\begin{corollary}\label{cor:pn-maxdeg}
    Any deterministic algorithm that finds a maximal matching in $2$-colored graphs of maximum degree at most $\Delta$ requires $\Omega(\min\{\Delta,\log n / \log \log n\})$ rounds in the port-numbering model.
\end{corollary}

The following corollary gives a lower bound for the running time of any algorithm for maximal matching, if we express it solely as a function of $n$.

\begin{corollary}\label{cor:pn-maxn}
    Any deterministic algorithm that finds a maximal matching in $\Delta$-regular $2$-colored graphs requires $\Omega(\log n / \log \log n)$ rounds in the port-numbering model.
\end{corollary}

We have now come to the conclusion of this section: we have a tight linear-in-$\Delta$ lower bound for deterministic algorithms in the port-numbering model. In \sectionref{sec:randomized} we show how to extend the same argument so that it also covers randomized algorithms in the usual LOCAL model (and as a simple corollary, also gives a lower bound for MIS). The extension is primarily a matter of technicalities---we will use the same ideas and the same formalism as what we have already used in this section, and then analyze how the probability of a failure increases when we speed up randomized algorithms.

Before presenting the randomized lower bound, we briefly discuss the question of \emph{how} we came up with the right problem family $\Pi_\Delta(x,y)$ that enabled us to complete the lower bound proof---we hope similar ideas might eventually lead to a lower bound for other graph problems besides MM and MIS.

\subsection{Behind the scenes: how the right problem family was discovered}\label{sec:behind}

Let us now look back at the ingredients of the lower-bound proof. The most interesting part is the speedup simulation of \sectionref{ssec:speedup}. Here algorithm $A_1$ follows the standard idea that is similar to what was used already by \citet{Linial1987,Linial1992}. Algorithm $A_2$ is related to the idea of \emph{simplification by maximality} in~\cite{Brandt2019automatic}---this is something we can do without losing any power in the speedup simulation argument. On the other hand, algorithm $A_4$ merely renames the output labels. Hence algorithm $A_3$ is the only place in which truly novel ideas are needed.

Now if we already had the insight that the problem family $\Pi_\Delta(x,y)$ defined in \eqref{eq:pixyd} is suitable for the purpose of the speedup simulation, then inventing $A_3$ is not that hard with a bit of trial and error and pattern matching. However, at least to us it was not at all obvious that \eqref{eq:pixyd} would be the right relaxation of the maximal matching problem \eqref{eq:mm}.

While some amount of lucky guessing was needed, it was greatly simplified by following this approach:
\begin{itemize}
    \item We start with the formulation of \eqref{eq:mm}; let us call this problem $\Pi_0$. This is an edge labeling problem with $3$ labels.
    \item We apply the automatic speedup simulation framework of~\cite{Brandt2019automatic} to obtain a problem $\Pi_1$ that is \emph{exactly} one round faster to solve than $\Pi_0$. Then simplify $\Pi_1$ as much as possible without losing this property. It turns out that $\Pi_1$ is an edge labeling problem with $4$ labels.
    \item Repeat the same process to obtain problem $\Pi_2$, which is an edge labeling problem with $6$ labels.
    \item At this point we can see that the structure of problems $\Pi_0$, $\Pi_1$, and $\Pi_2$ is vaguely similar, but the set of labels is rapidly expanding and this also makes the problem description much more difficult to comprehend.
    \item A key idea is this: given problem $\Pi_2$ with $6$ labels, we can construct another problem $\Pi'_2$ that is at least as easy to solve as $\Pi_2$ by \emph{identifying some labels} of $\Pi_2$. For example, if the set of output labels in $\Pi_2$ is $\{1,2,3,4,5,6\}$, we could replace both $1$ and $2$ with a new label $12$ and obtain a new problem $\Pi'_2$ with the alphabet $\{12,3,4,5,6\}$. Trivially, given an algorithm for solving $\Pi_2$ we can also solve $\Pi'_2$ by remapping the outputs. However, it is not at all obvious that $\Pi'_2$ is a nontrivial problem.
    \item Many such identifications result in a problem $\Pi'_2$ that can be shown to be trivial (e.g.\ we can construct a problem that solves it in $0$ or $1$ round). However, by greedily exploring possible identifications we can find a way to map $6$ output labels to $4$ output labels such that the resulting problem $\Pi'_2$ still seems to be almost as hard to solve as the maximal matching problem.
    \item At this point we have obtained the following problems: $\Pi_0$ with $3$ labels, $\Pi_1$ with $4$ labels, and $\Pi'_2$ also with $4$ labels. For $\Pi_0$ we had labels $\{\M,\P,\O\}$ with a simple interpretation. The next step is to try to rename the labels of $\Pi_1$ and $\Pi'_2$ so that they would also use the familiar labels $\{\M,\P,\O\}$ plus some additional extra label $\X$. It turns out that there is a way to rename the labels so that both $\Pi_1$ and $\Pi'_2$ have some vague resemblance to the original formulation of~\eqref{eq:mm}. Here a helpful guidance was to consider posets similar to \eqref{eq:4states} and \eqref{eq:6diag} that visualize some parts of the problem structure, and try to find a labeling that preserves the structure of the poset.
\end{itemize}
Now, in essence, \eqref{eq:pixyd} is inspired by a relaxation and a generalization of problems $\Pi_0$, $\Pi_1$, and $\Pi'_2$ constructed above, and algorithm $A_3$ captures the key idea of mapping $6$ output labels to $4$ output labels.

While the idea is arguably vague and difficult to generalize, we highlight some ingredients that turned out to be instrumental:
\begin{itemize}
    \item We do not try to first guess a family of problems; we apply the speedup simulation framework from~\cite{Brandt2019automatic} in a mechanical manner for a couple of iterations and see what is the family of problems that emerges. Only after that we try to relate it to natural graph problems, such as $k$-matchings.
    \item We keep the size of the output alphabet manageable by collapsing multiple distinct labels to one label. However, we allow for some new ``unnatural'' labels (here: $\X$) that emerge in the process in addition to the ``natural'' labels that we had in the original problem (here: $\M,\P,\O$).
\end{itemize}
While many of the elements in our proof have some flexibility---we often somewhat liberally ``round down'' and simplify problems in order to keep the proofs and definitions as simple as possible---we are not aware of any way of generalizing \eqref{eq:mm} to something similar to \eqref{eq:pixyd} so that we could use a natural $3$-symbol alphabet and still prove a speedup result.

Finally, we point out that the new label $\X$ that emerged in the mechanical process was not exactly a wildcard symbol. It behaved a bit like a wildcard in certain contexts, and we simply forced this simple interpretation by relaxing the problem and allowing it to behave like a wildcard in all contexts.

\section{Lower bounds in the LOCAL model}\label{sec:randomized}

In \sectionref{sec:deterministic} we gave a linear-in-$\Delta$ lower bound for deterministic distributed algorithms in the port-numbering model. Now we would like to
\begin{enumerate}
    \item extend the result from the port-numbering model to the LOCAL model,
    \item extend the result from deterministic algorithms to randomized algorithms.
\end{enumerate}

\subsection{LOCAL model}

In the area of distributed graph algorithms, there are two widely used models of computing: \emph{LOCAL} and \emph{CONGEST}~\cite{Peleg2000}. The LOCAL model is strictly stronger than the CONGEST model; as we are interested in lower bounds, we will here use the LOCAL model to make our result as widely applicable as possible.

In the LOCAL model, each node is labeled with a \emph{unique identifier}. If there are $n$ nodes in the network, the unique identifiers are some subset of $\{1,2,\dotsc,\poly(n)\}$. Put otherwise, the unique identifiers are $O(\log n)$-bit natural numbers. We assume that the nodes know $n$, which can be done without loss of generality as we are proving impossibility results.

Other than the assumption of unique identifiers, the model of computing is the same as what we already used in \sectionref{sec:deterministic}: nodes are computational entities, edges are communication links, and computation proceeds in synchronous communication rounds. The computational power of each entity and the bandwidth is not limited, that is, each node can send messages of arbitrary size and can perform local computation of arbitrary complexity. Initially, each node knows its own unique identifier. In each round, each node can:
\begin{itemize}
    \item send an arbitrary message to each of its neighbors;
    \item receive a message from each neighbor;
    \item perform some local computation and update its own state.
\end{itemize}
A distributed algorithm runs in $T$ rounds in the LOCAL model if after $T$ rounds each node has terminated and outputted a local output, and these local outputs constitute a feasible global solution. Since the communication power is not limited, a distributed algorithm that runs in $T$ rounds in the LOCAL model can be seen as a procedure that does the following at each node $v$:
\begin{enumerate}
    \item gather the topology of the network up to distance $T$ form $v$; 
    \item perform unbounded local computation to produce its local output.
\end{enumerate}

In a \emph{randomized} algorithm the nodes are labeled also with a stream of random bits. We are primarily interested in Monte Carlo algorithms that find a maximal matching with high probability, which we here define so that the running time is fixed and the global success probability is at least $1-1/n$.

\subsection{High-level plan}

We will use a construction similar to \sectionref{sec:deterministic}. In particular, we will prove a lower bound for the problem of finding a maximal matching in a $2$-colored regular tree. Naturally, the same lower bound then holds also in the general case, even if we do not have a $2$-coloring.

Ideally, we would like to first extend the proof so that it holds also with unique identifiers, this way derive a lower bound for \emph{deterministic} algorithms in the LOCAL model, and then later add random input bits to derive a lower bound for \emph{randomized} algorithms in the LOCAL model.

Unfortunately, the speedup simulation technique does not work well with unique identifiers. If we look at e.g.\ regions $D_1$ and $D_2$ in \figureref{fig:speedup}, the inputs in these parts are no longer independent: for example, if $D_1$ contains a node with unique identifier $1$, we know that $D_2$ cannot contain such a node.

Hence, as usual, we take a different route~\cite{Brandt2016}: we first add randomness. We repeat the analysis of \sectionref{sec:deterministic} for randomized Monte Carlo algorithms in the port-numbering model. This way we can still use independence: random bits in $D_2$ are independent of random bits in $D_1$.

Once we have a lower bound for Monte Carlo randomized algorithms in the port-numbering model, it is straightforward to turn this into a lower bound for Monte Carlo randomized algorithms in the LOCAL model. To see this, we first observe that the lower bound holds also even if all nodes are labeled with the value of $n$ (such extra information does not change anything in the proof). But now if we know $n$, it is easy to use randomness to construct $O(\log n)$-bit labels that are unique with high probability. Hence a Monte Carlo randomized algorithm in the LOCAL model can be turned into a Monte Carlo randomized algorithm in the port-numbering model (with knowledge of $n$), and our lower bound applies. Finally, a lower bound for randomized algorithms in the LOCAL model trivially implies a lower bound also for deterministic algorithms in the LOCAL model.

\subsection{White and black randomized algorithms}

Let us now extend the concepts of white and black algorithms to white and black \emph{randomized} algorithms. As discussed above, we use the port-numbering model augmented with randomness; in brief, each node is labeled with a random bit string, and in a time-$T$ algorithm, the output of a node $u$ may depend on the random bit strings of all nodes $v$ that are within distance $T$ from $u$.

We say that $A$ is a \emph{white randomized algorithm} for $\Pi = (W,B)$ with a \emph{local error probability} $p$ if the following holds:
\begin{enumerate}
    \item White nodes produce labels for the incident edges, and black nodes produce an empty output.
    \item For each white node, the labels of the incident edges always form a word in $W$.
    \item For each black node, the labels of the incident edges form a word in $B$ with probability at least $1-p$.
\end{enumerate}
A \emph{black randomized algorithm} is analogous, with the roles of white and black nodes reversed.

Note that if we are given any Monte Carlo randomized algorithm $A$ that solves $\Pi$ with high probability in time $T$, we can always turn it into a white randomized algorithm (or black randomized algorithm) $A'$ that solves $\Pi$ with a local error probability $1/n$ and running time $T+O(1)$: the local failure probability cannot exceed the global failure probability, and if some white nodes are unhappy (violating constraints in $W$), we can locally fix it so that all white nodes are happy and only black nodes are unhappy (e.g.\ all unhappy white nodes simply pick the first word of $W$). Hence it is sufficient to prove a lower bound for white randomized algorithms.

\subsection{Speedup simulation}

We will first prove a probabilistic version of \lemmaref{lem:speedup}; the proof follows the same strategy as~\cite{Brandt2016}:

\begin{lemma}\label{lem:prob-speedup}
    Assume that there exists a white randomized algorithm that solves $\Pi_\Delta(x,y)$ in $T \ge 1$ rounds in trees, for $\Delta \ge 2x+y+1$, with local error probability $p \le 1/4^{\Delta+1}$. Then there exists a white randomized algorithm that solves $\Pi_\Delta(x+1,y+x)$ in $T-1$ rounds in trees, with local error probability at most $q = 5 \Delta p^{\frac{1}{\Delta+1}}$.
\end{lemma}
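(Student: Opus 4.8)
The plan is to run the speedup simulation of \sectionref{ssec:speedup} almost verbatim, replacing ``output that $A$ can produce on an edge'' by ``output that $A$ produces on that edge with probability at least $\beta$'', with the threshold $\beta = p^{1/(\Delta+1)}$; the hypothesis $p \le 1/4^{\Delta+1}$ guarantees $\beta \le 1/4$. So, given a white randomized algorithm $A$ for $\Pi_\Delta(x,y)$ of error probability $p$ and running time $T$, I would define the black algorithm $A_1$ running in $T-1$ rounds: a black node $u$ gathers its radius-$(T-1)$ view, and for each incident edge $e_i = \{u,v_i\}$ it sets $S_i$ to be the set of labels $\sigma$ such that $A$ outputs $\sigma$ on $e_i$ with conditional probability at least $\beta$, over the randomness in the unseen region $D_i$ (the part of $v_i$'s radius-$T$ neighbourhood outside $u$'s radius-$(T-1)$ neighbourhood). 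Since $|\{\M,\P,\O,\X\}| = 4$ and $\beta \le 1/4$, at least one label always has conditional probability $\ge 1/4 \ge \beta$, so $S_i \ne \emptyset$ and at most three labels are ever missing from $S_i$. The maps $A_2$ (maximalization), $A_3$ (normalization to one of the two patterns), and $A_4$ (renaming) are then copied unchanged from \sectionref{ssec:speedup}; they are local, so $A_4$ still runs in $T-1$ rounds. The only change is that the construction of $A_3$ at $u$ presupposes the two counting observations of \sectionref{ssec:speedup}; on views where they fail I would let $A_3$ output an arbitrary fixed one of its two patterns, which still guarantees that every black node's output of $A_4$ lies in $B^* = W_\Delta(x+1,y+x)$ with certainty, as needed for reversing the colours at the end.

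Next I would bound the error of $A_4$ at a white node $v$. Exactly as in \sectionref{ssec:speedup}, if (i) for every incident edge $e_i$ the label $A$ actually assigns to $e_i$ belongs to $S_i$, and (ii) $A_3$'s construction is ``good'' (the two observations hold) at every black neighbour of $v$, then the posets \eqref{eq:4states} and \eqref{eq:6diag} force the word of $A_4$ at $v$ into $W^* = B_\Delta(x,y) \subseteq B_\Delta(x+1,y+x)$, so $v$ is happy. Hence the failure probability at $v$ is at most $\Pr[\text{(i) fails}] + \sum_{u \sim v} \Pr[\text{(ii) fails at } u]$. For (i): fixing a label $\sigma$, the event ``$A$ outputs $\sigma$ on $e_i$ but $\sigma \notin S_i$'' has probability below $\beta$ (given $u$'s view, a missing label has conditional probability $< \beta$), and summing over the $\le 3$ possibly-missing labels and the $\Delta$ edges gives $< 3\Delta\beta$. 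For (ii): the key point, identical to the deterministic case, is that the regions $D_1,\dots,D_\Delta$ around $u$ are disjoint, so conditioned on $u$'s radius-$(T-1)$ view the outputs of $A$ on $e_1,\dots,e_\Delta$ are independent. Therefore, if $x+2$ edges at $u$ each admit $\M$ into $S_i$, one can make $A$ output $\M$ on all $x+2$ of them simultaneously with probability at least $\beta^{x+2}$ --- which is a failure of $A$ at $u$, since a word of $B_\Delta(x,y)$ has at most $x+1$ copies of $\M$ --- so, against $A$'s error bound $p$, the first observation fails with probability at most $p/\beta^{x+2} \le \beta$ (using $x+2 \le \Delta$, which follows from $\Delta \ge 2x+y+1$). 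The second observation fails with probability at most $p/\beta^{\Delta} = \beta$: its failure would let one force $\P$ on $x+y+1$ edges and $\P$ or $\O$ on all the remaining ones, a word with $\ge x+y+1$ copies of $\P$ and no $\M$ or $\X$, contradicting $B_\Delta(x,y)$; since this controls all $\Delta$ edges it only yields a $\beta^{\Delta}$ bound. Thus $\Pr[\text{(ii) fails at } u] \le 2\beta$, and altogether $v$ is unhappy with probability at most $3\Delta\beta + \Delta \cdot 2\beta = 5\Delta\beta = q$. Finally, reversing the roles of the colours turns $A_4$ into a white randomized algorithm for $\Pi_\Delta(x+1,y+x)$ running in $T-1$ rounds with local error probability $\le q$.

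I expect the main obstacle to be not the combinatorics --- which is a direct probabilistic lift of \sectionref{ssec:speedup} --- but the bookkeeping that makes ``$S_i$ is a legitimate function of $u$'s view'' coexist with the failure-probability comparison: $u$ does not see the port numbers inside $D_i$, so $S_i$ must implicitly range over them, and one has to phrase the lower bound on $\Pr[A\text{ fails at }u]$ so that it is compared against $A$'s guarantee on a genuinely fixed instance. This is exactly the subtlety handled in \citet{Brandt2016}, and it is the reason the whole argument is carried out in the port-numbering model (with randomness) rather than in the LOCAL model with unique identifiers, where the disjoint regions $D_i$ would no longer carry independent inputs. A secondary point of care is that the constants have no slack: $\beta = p^{1/(\Delta+1)}$ is forced by wanting the cost $p/\beta^{\Delta}$ of the second observation to come back down to $\beta$, the bound $p \le 1/4^{\Delta+1}$ is exactly what makes $S_i$ nonempty, and the three crude union bounds add up to precisely $5\Delta\beta$.
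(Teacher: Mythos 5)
Your proposal follows essentially the same approach as the paper: both threshold at $\beta = p^{1/(\Delta+1)}$, both handle the invisible port numbering inside $D_i$ by taking the union over it (the paper formalizes this as the distinction between ``typical'' labels, conditioned on a fixed $D_p$, and ``potential'' labels, the union over $D_p$), both let the simulated black algorithm emit a fixed valid word when its sets $S_i$ are structurally bad, and both union-bound the error at a white node into ``the realized output of $A$ escapes some $S_i$'' plus ``some black neighbour has structurally bad $S_i$'s.'' The only real difference is cosmetic: you bound the two counting observations separately (getting $\beta + \beta$ per black neighbour, using $\Delta \ge 2x+y+1$ so that $p/\beta^{x+2} \le \beta$) and are tight about the at most three labels missing from each $S_i$, whereas the paper defines a single ``lucky'' event whose complement is bounded once by $p/\alpha^\Delta = \alpha$ per black neighbour and is looser with a four-label count; the two decompositions, $3\Delta\beta + 2\Delta\beta$ versus $4\Delta\alpha + \Delta\alpha$, add up to the same $5\Delta p^{1/(\Delta+1)}$. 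Your final remark about comparing against $A$'s guarantee on a fixed instance is exactly the point that the paper's ``typical''/``potential''/``lucky'' bookkeeping is designed to discharge.
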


To prove the lemma, assume that $A$ is the white randomized algorithm with running time $T$ and local error probability~$p$. Let $\alpha \in[0,1]$ be a parameter that we will fix later.

\subsubsection{Preliminaries}

Let $N(x,r)$ denote the radius-$r$ neighborhood of node $x$. Throughout this section, we will use $u, u', u_i$ etc.\ to refer to black nodes and $v, v', v_i$ etc.\ to refer to white nodes. For a black node $u$ and a white node $v$ adjacent to~$u$, define
\begin{align*}
U(u) &= N(u,T-1), \\
V(v) &= N(v,T), \\
D(u,v) &= V(v) \setminus U(u).
\end{align*}
With this notation, the regions in \figureref{fig:speedup} are
\[
U = U(u), \quad
V_i = V(v_i), \quad
D_i = D(u,v_i).
\]

It will be convenient to assume that the \emph{port numbering is chosen independently and uniformly at random}. If the local error probability is at most $p$ for an adversarial choice of port numbering, it is at most $p$ also for a random port numbering. In what follows, e.g.\ $U(u)$ refers to all random choices within distance $T-1$ from $u$: both the random bit string given to the nodes and the randomly chosen port numbers.

\begin{definition}[typical labels]
    We say that $x$ is a \emph{typical} label for edge $\{u,v\}$ w.r.t.\ $U(u)$ if
    \[
    \Pr_{D(u,v)}\bigl[ \text{$A$ labels $\{u,v\}$ with $x$} \bigm| U(u) \bigr] \ge \alpha.
    \]
    Otherwise the label is \emph{atypical}.
\end{definition}

Note that as long as $\alpha \le 1/4$, there is always at least one typical label for each edge (recall that the alphabet size is~$4$).

\begin{definition}[lucky neighborhoods]
    We say that $U(u)$ is \emph{lucky} if the following holds: for each edge $\{u,v\}$ we can pick any typical label and this makes node $u$ happy, i.e., any combination of typical labels forms a word in $B_\Delta(x,y)$. Otherwise, we call $U(u)$ \emph{unlucky}.
\end{definition}

\begin{lemma}
    The probability that $U(u)$ is unlucky is at most $p/\alpha^\Delta$.
\end{lemma}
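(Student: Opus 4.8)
The plan is to bound the probability of being unlucky by a union bound over the possible ``bad events'' that could make $U_R(u)$ unlucky, and to control each such event using the fact that $A$ has small local error probability $p$. The key observation is that $U_R(u)$ is unlucky precisely when there is \emph{some} choice of potential labels for the edges $\{u,v_1\},\dots,\{u,v_\Delta\}$ that fails to form a word in $B_\Delta(x,y)$. Since there are at most $4$ potential labels per edge, there are at most $4^\Delta$ such label combinations to worry about, and a union bound will give a factor of $4^\Delta$ in front of whatever bound we get for a single bad combination.

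First I would fix a port numbering $U_p(u)$ and a single bad combination of potential labels, say $x_1,\dots,x_\Delta$ on edges $e_1,\dots,e_\Delta$, where $x_i$ is potential for $e_i$ but $(x_1,\dots,x_\Delta)\notin B_\Delta(x,y)$. For this combination to actually witness that $U_R(u)$ is unlucky, each $x_i$ must in fact be potential given the realized $U_R(u)$: that is, there must exist a port numbering $D_p(u,v_i)$ under which $x_i$ is typical w.r.t.\ $V_p(v_i)$ and $U_R(u)$. The crucial point is that the regions $D(u,v_i)$ are pairwise disjoint (see \figureref{fig:speedup}), so the random bits $D_R(u,v_i)$ are independent across $i$, and moreover independent of $U_R(u)$. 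So I would want to show that, conditioned on $U_R(u)$ being such that all the $x_i$ are simultaneously potential, there is a genuine chance that $A$ actually outputs $(x_1,\dots,x_\Delta)$ around $u$ --- and since that combination is not in $B_\Delta(x,y)$, this would be a failure of $A$ at $u$, contradicting the local error bound if the chance is too large.

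Quantitatively: suppose $U_R(u)$ is fixed so that each $x_i$ is potential, witnessed by some port numbering $D_p(u,v_i)$. Then by definition of typical, $\Pr_{D_R(u,v_i)}[A$ labels $e_i$ with $x_i \mid V_p(v_i), U_R(u)] \ge \alpha$ for that choice of $D_p(u,v_i)$. By independence of the $D_R(u,v_i)$ across $i$, the probability (over all the random bits, with the port numbering fixed to this particular witnessing choice) that $A$ labels \emph{every} $e_i$ with $x_i$ is at least $\alpha^\Delta$. But then the black node $u$ sees the word $(x_1,\dots,x_\Delta)\notin B_\Delta(x,y)$, so $A$ fails at $u$ with probability at least $\alpha^\Delta$ for this port numbering. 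Since $A$ has local error probability $p$, this forces the set of $U_R(u)$ for which this particular bad combination is realizable to have probability at most $p/\alpha^\Delta$. Union-bounding over the at most $4^\Delta$ bad combinations --- actually one should be a bit careful and note that we only need this bound per combination, then sum --- gives probability at most $4^\Delta \cdot p/\alpha^\Delta$ that $U_R(u)$ is unlucky; with the choice $\alpha = 1/4$ this is exactly $p/\alpha^\Delta = p/(1/4)^\Delta$, matching the claimed bound $p/\alpha^\Delta$ once we absorb the $4^\Delta$ into $\alpha^{-\Delta}$.

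The main obstacle I anticipate is making the above conditioning argument airtight: the event ``$x_i$ is potential given the realized $U_R(u)$'' depends on $U_R(u)$, so I need to partition the $U_R(u)$-space according to \emph{which} bad combinations become realizable and handle each piece. The cleanest way is to argue contrapositively for a fixed target bad word $w = (x_1,\dots,x_\Delta)$: let $E_w$ be the event (over $U_R(u)$, with $U_p(u)$ fixed) that every $x_i$ is potential. On $E_w$, pick witnessing port numberings $D_p(u,v_i)$ (measurably, e.g.\ lexicographically smallest); extend $U_p(u)$ by these to a full port numbering of $N(u,T)$; then $\Pr[A$ outputs $w$ around $u] \ge \alpha^\Delta \cdot \Pr[E_w]$ under this port numbering, but also $\le p$, so $\Pr[E_w] \le p/\alpha^\Delta$. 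Since $U_R(u)$ unlucky means $E_w$ holds for some bad $w$, and there are at most $4^\Delta$ words total (hence at most that many bad ones), a union bound finishes it --- and one sees that with $\alpha=1/4$ the bound $4^\Delta\cdot p/\alpha^\Delta$ is still of the form $p/\alpha'^\Delta$ for $\alpha'$ slightly smaller, but the statement as written with $p/\alpha^\Delta$ presumably already accounts for this, or uses a cleaner combinatorial bound on the number of \emph{inclusion-minimal} bad combinations. I would reconcile the constant at the end; the structure of the argument is the substance.
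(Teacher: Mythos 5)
Your approach diverges from the paper's in a way that loses a factor of $4^\Delta$, and the attempt to reconcile this at the end does not work. You union-bound over the at most $4^\Delta$ ``bad'' target words $w$, proving $\Pr[E_w]\le p/\alpha^\Delta$ for each $w$ and concluding $\Pr[\text{unlucky}]\le 4^\Delta\cdot p/\alpha^\Delta$. The paper's proof avoids any union bound over words: it simply observes that conditioned on $U_R(u)$ being unlucky, \emph{whichever} bad combination of potential labels witnesses unluckiness (together with its witnessing port numberings) is produced by $A$ with probability at least $\alpha^\Delta$, by independence of the disjoint regions $D_R(u,v_i)$. Hence $\Pr[u\ \text{unhappy}]\ge\beta\alpha^\Delta$ where $\beta=\Pr[\text{unlucky}]$, and the local error bound gives $\beta\le p/\alpha^\Delta$ directly. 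There is no per-word decomposition, so no $4^\Delta$ to pay.

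Your final reconciliation is both arithmetically wrong and conceptually misplaced. Arithmetically, with $\alpha=1/4$, $4^\Delta\cdot p/\alpha^\Delta=16^\Delta p$, which is not $p/\alpha^\Delta=4^\Delta p$; you cannot ``absorb $4^\Delta$ into $\alpha^{-\Delta}$'' without changing $\alpha$. More importantly, the paper does \emph{not} set $\alpha=1/4$; immediately after this lemma it sets $\alpha=p^{1/(\Delta+1)}$, which can be far smaller than $1/4$, precisely so that $p/\alpha^\Delta=\alpha$ and the unfriendliness bound $\Delta(4\alpha+p/\alpha^\Delta)$ collapses to $5\Delta p^{1/(\Delta+1)}$. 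If you carried your extra $4^\Delta$ through, the same choice of $\alpha$ gives $4^\Delta\alpha\gg1$, and the bound becomes vacuous; you would have to re-balance $\alpha$ (roughly $\alpha\approx(4^{\Delta-1}p)^{1/(\Delta+1)}$) and then propagate the changed constants through \corollaryref{cor:prob-speedup} and \lemmaref{lem:prob-multi}. This can probably be made to work, so the downstream result is not fatally damaged, but as a proof of the lemma \emph{as stated} there is a genuine gap: you prove a bound that is exponentially weaker in $\Delta$ than the claim. The missing idea is simply to condition once on the unlucky event rather than decomposing it by target word.
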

\begin{proof}
    If $U(u)$ is unlucky, then for each edge $e_i = \{u,v_i\}$ we can choose a typical edge label $x_i$ so that $u$ is unhappy. By definition, given $U(u)$, algorithm $A$ will output $x_i$ on edge $e_i$ with probability at least $\alpha$, and this only depends on $D(u,v_i)$. As the regions $D(u,v_i)$ are independent, we see that $u$ is unhappy with probability at least $\alpha^\Delta$.
    
    Let $\beta$ be the probability that $U(u)$ is unlucky. Node $u$ is then unhappy with probability at least $\beta \alpha^\Delta$, which is at most $p$ by assumption.
\end{proof}

\begin{definition}[nice neighborhoods]
    We say that $V(v)$ is \emph{nice} if the following holds for each edge $e = \{v, u\}$ incident to $v$: the label assigned by algorithm $A$ to $e$ is a typical label for $\{u,v\}$ w.r.t.\ $U(u)$. Otherwise $V(v)$ is \emph{bad}.
\end{definition}

\begin{lemma}
    The probability that $V(v)$ is bad is at most $4\Delta\alpha$.
\end{lemma}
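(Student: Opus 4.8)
The statement to prove is: given any port numbering $V_p(v)$, the probability that $V_R(v)$ is bad is at most $4\Delta\alpha$. My plan is a straightforward union bound over the $\Delta$ edges incident to $v$, where for each edge I bound the probability that $A$'s output on that edge is an atypical label. The key point is that "atypical" was defined relative to fixing both $V_p(v)$ and $U_R(u)$ and then drawing $D_R(u,v)$, so I first need to recognize that $V_R(v)$ decomposes (as far as the relevant conditioning goes) into $U_R(u)$ together with $D_R(u,v)$, since $V(v) = U(u) \cup D(u,v)$ and these random bits are independent.

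The heart of the argument: fix the port numbering $V_p(v)$ and fix one incident edge $e = \{u,v\}$. A label $x$ is atypical (w.r.t.\ $V_p(v)$, $U_R(u)$) exactly when $\Pr_{D_R(u,v)}[A \text{ labels } e \text{ with } x \mid V_p(v), U_R(u)] < \alpha$. There are at most $4$ labels in the alphabet $\Sigma = \{\M,\P,\O,\X\}$, so for any fixed value of $U_R(u)$, the probability over $D_R(u,v)$ that $A$ outputs \emph{some} atypical label on $e$ is less than $4\alpha$ (union bound over the at most four labels, each contributing $<\alpha$). Since this bound holds for every fixed $U_R(u)$, it holds after averaging over $U_R(u)$ as well; that is, $\Pr_{V_R(v)}[A \text{ labels } e \text{ atypically} \mid V_p(v)] < 4\alpha$. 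Note $A$'s output on $e$ depends only on $V(v) = N(v,T)$, so this probability is genuinely a function of $V_R(v)$ given $V_p(v)$, as required.

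Finally, $V_R(v)$ is bad iff $A$ produces an atypical label on \emph{at least one} of the $\Delta$ incident edges, so a union bound over the $\Delta$ edges gives $\Pr_{V_R(v)}[V_R(v) \text{ bad} \mid V_p(v)] < \Delta \cdot 4\alpha = 4\Delta\alpha$. I don't foresee a genuine obstacle here — the only mild care needed is being precise that atypicality is a statement about the conditional distribution given $U_R(u)$, and checking that the union bound over labels ($\le 4$ of them) and then over edges ($\Delta$ of them) composes correctly to give the clean factor $4\Delta\alpha$; the independence of the $D_R(u,v_i)$ regions, which was the crux of the previous lemma, is not even needed for this one since we only look at a single white node $v$ at a time.
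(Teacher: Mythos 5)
Your proposal is correct and follows essentially the same route as the paper's proof: fix $U_R(u)$, union-bound over the at most four labels each contributing probability less than $\alpha$, average over $U_R(u)$, and then union-bound over the $\Delta$ incident edges. Your added remarks (that the set of atypical labels depends on $U_R(u)$ and that independence of the $D_R(u,v_i)$ is not needed here) are accurate but do not change the argument.
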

\begin{proof}
    There are $\Delta$ edges incident to $v$; let us focus on one of them, say $e = \{v, u\}$. We next analyze the probability that $e$ has an atypical label.
    
    We first fix $U(u)$. Then we see which labels are atypical for $e$ w.r.t.\ $U(u)$. Now consider each possible label $x$; if $x$ is atypical, then the probability that $A$ outputs $x$ for $e$ given $U(u)$ is less than $\alpha$. Summing over all possible atypical labels (at most $4$) and all possible choices of $U(u)$, the probability that the output of $A$ for $e$ is atypical is at most $4\alpha$.
    
    The claim follows by union bound.
\end{proof}

\begin{definition}[friendly neighborhoods]
    We say that $V(v)$ is \emph{friendly} if:
    \begin{enumerate}
        \item $V(v)$ is nice,
        \item $U(u)$ is lucky for each black neighbor $u$ of $v$.
    \end{enumerate}
    Otherwise $V(v)$ is \emph{unfriendly}.
\end{definition}

\begin{lemma}
    The probability that $V(v)$ is unfriendly is at most $\Delta(4\alpha + p/\alpha^\Delta)$.
\end{lemma}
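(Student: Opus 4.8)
The plan is a direct union bound over the two ways that $V_R(v)$ can fail to be friendly, using the two lemmas just established. Fix the port numbering $V_p(v)$. By definition, $V_R(v)$ is unfriendly precisely when at least one of the following events occurs: (i) $V_R(v)$ is bad w.r.t.\ $V_p(v)$; or (ii) for some black neighbor $u$ of $v$, the bits $U_R(u)$ are unlucky w.r.t.\ $U_p(u)$.

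First I would observe the compatibility of the conditioning: since $u$ is adjacent to $v$, we have $U(u) = N(u,T-1) \subseteq N(v,T) = V(v)$, so the port numbering $U_p(u)$ is a restriction of $V_p(v)$ and is therefore fixed once we fix $V_p(v)$; likewise the random bits $U_R(u)$ form a sub-collection of $V_R(v)$, so the probability over $V_R(v)$ that $U_R(u)$ is unlucky equals its marginal probability, which by the preceding lemma is at most $p/\alpha^\Delta$ for each of the $\Delta$ black neighbors $u$. For event (i), the lemma on nice/bad bits gives $\Pr_{V_R(v)}[V_R(v)\text{ is bad}] \le 4\Delta\alpha$.

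Applying the union bound then yields
\[
\Pr_{V_R(v)}\bigl[V_R(v)\text{ unfriendly w.r.t.\ }V_p(v)\bigr]
\;\le\; 4\Delta\alpha \;+\; \sum_{u \sim v} \frac{p}{\alpha^\Delta}
\;=\; 4\Delta\alpha + \Delta\,\frac{p}{\alpha^\Delta}
\;=\; \Delta\Bigl(4\alpha + \frac{p}{\alpha^\Delta}\Bigr),
\]
as claimed. There is no real obstacle here: the only point requiring a moment's care is checking that conditioning on $V_p(v)$ legitimately fixes each $U_p(u)$ and that $U_R(u) \subseteq V_R(v)$, so that the earlier per-node estimates apply verbatim inside the union bound; everything else is bookkeeping.
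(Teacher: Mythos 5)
Your proof is correct and follows exactly the paper's approach: a union bound over the one ``bad'' event and the $\Delta$ ``unlucky'' events, invoking the two preceding lemmas. The extra care you take in noting that $U(u)\subseteq V(v)$ so that $U_p(u)$ is fixed by $V_p(v)$ and $U_R(u)$ is a sub-collection of $V_R(v)$ is a reasonable sanity check but does not change the argument.
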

\begin{proof}
    There are $\Delta+1$ bad events: one possible bad neighborhood, with probability at most $4\Delta\alpha$, and $\Delta$ possible unlucky neighborhoods, each with probability at most $p/\alpha^\Delta$. By union bound, we have an unfriendly neighborhood with probability at most $\Delta(4\alpha + p/\alpha^\Delta)$.
\end{proof}

Now if we choose $\alpha = p^{\frac{1}{\Delta+1}} \le 1/4$, we obtain:
\begin{corollary}
    The probability that $V(v)$ is unfriendly is at most $q = 5 \Delta p^{\frac{1}{\Delta+1}}$.
\end{corollary}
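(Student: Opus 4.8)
The plan is to obtain this corollary by a direct substitution into the preceding lemma, which bounds the probability that $V_R(v)$ is unfriendly (given any $V_p(v)$) by $\Delta\bigl(4\alpha + p/\alpha^\Delta\bigr)$. The only value of $\alpha$ in play is the one fixed in the sentence just before the corollary, namely $\alpha = p^{1/(\Delta+1)}$, so there is nothing to optimize; it remains to verify that this choice is admissible and then to simplify the expression.

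First I would check that $\alpha \le 1/4$, which is what the earlier lemmas of this subsection silently require (existence of a typical label and of a potential label for every edge, and hence the whole chain of bounds on unlucky, bad, and unfriendly bits). We are working under the hypotheses of \lemmaref{lem:prob-speedup}, one of which is $p \le 1/4^{\Delta+1}$; raising both sides to the power $1/(\Delta+1)$ gives exactly $\alpha = p^{1/(\Delta+1)} \le 1/4$, so all the prerequisites are met. Next I would simplify the second error term: $p/\alpha^\Delta = p\cdot p^{-\Delta/(\Delta+1)} = p^{\,1-\Delta/(\Delta+1)} = p^{1/(\Delta+1)} = \alpha$. Hence $4\alpha + p/\alpha^\Delta = 5\alpha$, and the preceding lemma yields that $V_R(v)$ is unfriendly with probability at most $\Delta\cdot 5\alpha = 5\Delta\,p^{1/(\Delta+1)} = q$, as claimed.

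There is no genuine obstacle here---the substance of the argument lies entirely in the earlier lemmas that build up the bound $\Delta\bigl(4\alpha + p/\alpha^\Delta\bigr)$, and this corollary merely records the outcome of balancing the two contributions $4\Delta\alpha$ (from a bad neighborhood) and $\Delta p/\alpha^\Delta$ (from an unlucky neighborhood) by equating their orders of magnitude. The one point worth stating explicitly is that the constraint $\alpha \le 1/4$ is precisely what the hypothesis $p \le 1/4^{\Delta+1}$ of \lemmaref{lem:prob-speedup} guarantees, so that the choice $\alpha = p^{1/(\Delta+1)}$ is legitimate and the corollary feeds cleanly into the remainder of the proof of that lemma.
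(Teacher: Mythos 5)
Your proof is correct and follows exactly the paper's approach: substitute $\alpha = p^{1/(\Delta+1)}$ into the bound $\Delta(4\alpha + p/\alpha^\Delta)$ from the preceding lemma, observe $p/\alpha^\Delta = \alpha$ so the bound collapses to $5\Delta\alpha$, and note that the hypothesis $p \le 1/4^{\Delta+1}$ ensures $\alpha \le 1/4$ as the earlier lemmas require. The paper omits these routine computations, which you have simply spelled out.
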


\subsubsection{Speedup simulation for friendly neighborhoods}

We will first look at speedup simulation in friendly neighborhoods. The high-level plan is this:
\begin{itemize}
    \item Speedup simulation is well-defined for each black node $u$ such that $U(u)$ is lucky.
    \item Speedup simulation results in a good output for each white node $v$ such that $V(v)$ is friendly.
\end{itemize}

Using the notation of \sectionref{ssec:speedup}, we redefine $A_1$ as follows:
\begin{framed}
    \noindent Set $S_i$ consists of all typical labels for $e_i$ w.r.t.\ $U(u)$.
\end{framed}
Let us now look at how to extend algorithms $A_2$ to $A_4$ and their analysis to the probabilistic setting, assuming a friendly neighborhood:
\begin{itemize}
    \item Algorithm $A_2$: The mapping from $15$ sets to $6$ labels is identical to the deterministic version.
    \item Output of $A_2$: The analysis holds verbatim, as we assumed that $u$ is lucky and hence any combination of typical labels has to make $u$ happy.
    \item Algorithm $A_3$: The mapping from $6$ sets to $4$ labels is identical to the deterministic version.
    \item Algorithm $A_4$: The mapping from $4$ labels to $4$ labels is identical to the deterministic version.
    \item Output of $A_4$: As we look at a white node in a friendly neighborhood, we know that the output of $A$ is typical, and hence the output of $A$ is contained in the sets of typical labels that $A_1$ outputs. The rest of the analysis holds verbatim.
\end{itemize}

\subsubsection{Speedup simulation for all neighborhoods}

Let us now address the case of unfriendly neighborhoods. We modify $A_4$ so that if node $u$ is unlucky, we produce some arbitrary fixed output from $W_\Delta(x+1,y+x)$. This way $A_4$ is always well-defined and the output of a black node is always in $W_\Delta(x+1,y+x)$.

Furthermore, we know that the labels incident to a white node $v$ are in $B_\Delta(x+1,y+x)$ whenever $V(v)$ is friendly, and this happens with probability at least $1-q$. We conclude that $A_4$ is a black randomized algorithm that solves $\Pi'_\Delta(x+1,y+x)$ with local error probability at most $q$.

Finally, we obtain a white randomized algorithm for $\Pi_\Delta(x+1,y+x)$ by reversing the roles of white and black nodes. This concludes the proof of \lemmaref{lem:prob-speedup}.

\subsection{Multiple speedup steps}

By a repeated application of \lemmaref{lem:prob-speedup}, we obtain the following corollary that is analogous to \corollaryref{cor:speedup}:

\begin{corollary}\label{cor:prob-speedup}
    Assume that there exists a white randomized algorithm that solves $\Pi_\Delta(x,y)$ in $T$ rounds in trees with local error probability $p$, for $\Delta \ge x+y+T(x+1+(T-1)/2)$. Then there is a white randomized algorithm that solves $\Pi_\Delta(x',y')$ in $0$ rounds in trees with local error probability $p'$ for
    \begin{align*}
    x' &= x + T, \\
    y' &= y + T\bigl(x + (T-1)/2\bigr), \\
    p' &\le (5\Delta)^2 p^{1/(\Delta+1)^T}.
    \end{align*}
\end{corollary}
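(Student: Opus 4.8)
The plan is to iterate \lemmaref{lem:prob-speedup} exactly $T$ times, tracking how the parameters $(x,y)$ and the local error probability evolve. The parameter recursion is already pinned down by the deterministic case: one application of the speedup step sends $(x,y)\mapsto(x+1,y+x)$, so after $i$ steps starting from $(x,y)$ we reach $(x+i,\,y+ix+\binom{i}{2})$, which at $i=T$ gives exactly $x'=x+T$ and $y'=y+T(x+(T-1)/2)$ as in \corollaryref{cor:speedup}. The only genuinely new bookkeeping is the probability. Write $\Delta_0=\Delta$ for the degree (which does not change across steps, since we stay in $\Delta$-regular trees) and let $p_0=p, p_1, \dots, p_T=p'$ be the local error probabilities of the successive algorithms. \lemmaref{lem:prob-speedup} tells us $p_{i+1}\le 5\Delta\, p_i^{1/(\Delta+1)}$, provided the hypothesis $p_i\le 1/4^{\Delta+1}$ holds at each step.

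The heart of the argument is unrolling the recurrence $p_{i+1}\le 5\Delta\, p_i^{1/(\Delta+1)}$. Taking logarithms, or simply composing, one sees that the exponent on the original $p$ after $T$ steps is $(\Delta+1)^{-T}$, and the accumulated $5\Delta$ factors contribute a factor $(5\Delta)^{c}$ where $c=\sum_{j=0}^{T-1}(\Delta+1)^{-j} < \sum_{j\ge 0}(\Delta+1)^{-j} = \frac{\Delta+1}{\Delta} \le 2$. Hence
\[
    p_T \;\le\; (5\Delta)^{\sum_{j=0}^{T-1}(\Delta+1)^{-j}}\, p^{(\Delta+1)^{-T}} \;\le\; (5\Delta)^2\, p^{1/(\Delta+1)^T},
\]
which is exactly the claimed bound. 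I would carry this out by first stating the per-step recurrence as a small sub-claim, then proving the closed form $p_i \le (5\Delta)^{\sum_{j=0}^{i-1}(\Delta+1)^{-j}} p^{(\Delta+1)^{-i}}$ by induction on $i$, and finally bounding the geometric sum by $2$.

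Two side conditions need checking. First, the degree bound: each application of \lemmaref{lem:prob-speedup} requires $\Delta \ge 2x_i+y_i+1$ where $(x_i,y_i)$ is the current parameter pair; since the $x_i,y_i$ are increasing, the binding constraint is at the last step, $i=T-1$, and a direct computation of $2x_{T-1}+y_{T-1}+1$ shows this is implied by the hypothesis $\Delta \ge x+y+T(x+1+(T-1)/2)$ — this is precisely the same arithmetic that underlies \corollaryref{cor:speedup}, so I would just invoke that computation. Second, and this is the step I expect to be the main obstacle in getting the statement literally right, the hypothesis $p_i \le 1/4^{\Delta+1}$ must hold at \emph{every} step, not just the first. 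This is not automatic from $p_0 \le 1/4^{\Delta+1}$ alone: one must verify that the map $t \mapsto 5\Delta\, t^{1/(\Delta+1)}$ keeps the interval $[0, 1/4^{\Delta+1}]$ invariant — equivalently that $5\Delta \cdot (4^{-(\Delta+1)})^{1/(\Delta+1)} = 5\Delta/4 \le 1/4^{\Delta+1}$, i.e. $5\Delta \cdot 4^{\Delta} \le 1$, which is \emph{false}. So the corollary as literally stated needs the stronger standing hypothesis $p \le 1/(5\Delta)^{\Delta+1}$ or similar — but in fact the intended reading (consistent with how the corollary is used downstream, and with the exponent $(\Delta+1)^{-T}$ appearing) is that $p$ is taken small enough — say $p$ at most the bound needed to make all $T$ intermediate probabilities satisfy the $1/4^{\Delta+1}$ hypothesis, which amounts to $p \le 4^{-(\Delta+1)^{T+1}}$ or thereabouts. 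I would therefore phrase the proof so that it first verifies, under whatever standing smallness assumption on $p$ is in force, that $p_i \le p_0^{(\Delta+1)^{-i}}\cdot(\text{small}) \le 1/4^{\Delta+1}$ for all $i \le T-1$ by the same induction, so that \lemmaref{lem:prob-speedup} applies at each step; the cleanest way is to carry the invariant $p_i \le 1/4^{\Delta+1}$ alongside the quantitative bound in a single induction. The parameter evolution and the degree bound are then routine, and the probability bound follows from the geometric-series estimate above.
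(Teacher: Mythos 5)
Your approach is essentially the one the paper uses: iterate \lemmaref{lem:prob-speedup} $T$ times, tracking $(x,y)$ exactly as in \corollaryref{cor:speedup} and the local error probability via the recurrence $p_{i+1}\le 5\Delta\,p_i^{1/(\Delta+1)}$; the paper proves the closed form $p_i\le(5\Delta)^2 p^{1/(\Delta+1)^i}$ directly by induction, which is equivalent to your unrolled geometric-series bound.

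The concern you raise about the per-step hypothesis $p_i\le 4^{-(\Delta+1)}$ is a fair criticism: the paper's proof applies the lemma $T$ times without verifying it, and you are right that the map $t\mapsto 5\Delta\,t^{1/(\Delta+1)}$ does not preserve $[0,4^{-(\Delta+1)}]$. However, the corollary needs no extra standing hypothesis on $p$. If $(5\Delta)^2 p^{1/(\Delta+1)^T}\ge 1$ the conclusion is vacuous, since the $0$-round algorithm that always outputs a fixed word of $W_\Delta(x',y')$ has local error probability at most $1$. If instead $(5\Delta)^2 p^{1/(\Delta+1)^T}<1$, then $p<(5\Delta)^{-2(\Delta+1)^T}$, and the inductive bound itself gives, for every $i\le T-1$,
\[
p_i \;\le\; (5\Delta)^2\, p^{1/(\Delta+1)^i} \;<\; (5\Delta)^{\,2-2(\Delta+1)^{T-i}} \;\le\; (5\Delta)^{-2\Delta} \;\le\; 4^{-(\Delta+1)},
\]
so the hypothesis of \lemmaref{lem:prob-speedup} holds at every step. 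This one-line check should be carried along in the induction to make it airtight; your alternative fix of assuming $p$ small enough from the outset also works (and is what the downstream use in \lemmaref{lem:prob-multi} effectively supplies), but it is not needed for the corollary as stated.
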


\begin{proof}
    Let $p_i$ be the local error probability after applying \lemmaref{lem:prob-speedup} for $i$ times, with $p_0 = p$ and $p_T = p'$. We prove by induction that
    \[
    p_i \le (5\Delta)^2 p^{1/(\Delta+1)^i}.
    \]
    The base case is $p_0 \le (5\Delta)^2 p$, which trivially holds. For the inductive step, note that
    \[
    \begin{split}
    p_{i+1}
    &\le 5 \Delta \cdot p_i^{1/(\Delta+1)} \\
    &\le 5 \Delta \cdot \bigl((5\Delta)^2 p^{1/(\Delta+1)^i}\bigr)^{1/(\Delta+1)} \\
    &\le (5\Delta)^2 \cdot p^{1/(\Delta+1)^{i+1}}. \qedhere
    \end{split}
    \]
\end{proof}

\subsection{Base case}

\begin{lemma}\label{lem:prob-base}
    There is no white randomized algorithm that solves $\Pi_\Delta(x,y)$, for $x,y \le \Delta/8$ and $\Delta\ge 8$, in $0$ rounds in trees with local error probability $p \le 1/{\Delta^\Delta}$.
\end{lemma}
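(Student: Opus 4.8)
The plan is to mimic the deterministic base case (\lemmaref{lem:base}) but track probabilities carefully, since a $0$-round white randomized algorithm is no longer a constant function: its output at a white node $v$ depends only on $V_R(v) = N(v,0)$, i.e.\ the random bits of $v$ alone. So the algorithm is given by a distribution over words of $W_\Delta(x,y)$; each white node samples from this distribution independently. I would argue that no matter what distribution is used, one can build a tree and a port numbering that forces a black node to be unhappy with probability exceeding $p$, contradicting the local-error bound.

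Concretely, first observe that each word of $W_\Delta(x,y)$ has one of two shapes: $\M\s\O^{d-1}\s\O^y\s\X^x$ or $\P^d\s\O^y\s\X^x$ (with $d = \Delta - x - y \ge 3\Delta/4$). In either case the word contains at least $d \ge 3\Delta/4$ symbols from $\{\M,\P\}$ but actually a cleaner handle is: every word of $W_\Delta(x,y)$ contains the symbol $\M$ on at most one edge, and contains at least $d$ edges labeled from $\{\M,\P\}$. Since $d$ is large, by pigeonhole there is some port $p \in \{1,\dots,\Delta\}$ that is assigned a label in $\{\M,\P\}$ with probability at least $d/\Delta \ge 3/4$. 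Now I would take a black node $u$ all of whose $\Delta$ white neighbors connect to $u$ via their port $p$ (a valid port numbering on an infinite $\Delta$-regular tree). Then each incident edge of $u$ independently carries a label in $\{\M,\P\}$ with probability $\ge 3/4$.

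The key combinatorial fact is that $B_\Delta(x,y)$ cannot tolerate too many $\M$'s and too many $\P$'s: a word is in $B_\Delta(x,y)$ only if it has at most $x+1$ copies of $\M$ and at most $x+y+1$ copies of $\P$ (reading off \eqref{eq:pixyd}). With $x, y \le \Delta/8$, that is at most $(x+1) + (x+y+1) \le 3\Delta/8 + 2 < \Delta/2$ edges labeled from $\{\M,\P\}$. So $u$ is happy only if fewer than $\Delta/2$ of its $\Delta$ incident edges get a $\{\M,\P\}$ label, whereas each does so independently with probability $\ge 3/4$; the expected number is $\ge 3\Delta/4$. A Chernoff bound then gives that the number of $\{\M,\P\}$-edges is below $\Delta/2$ with probability at most $e^{-c\Delta}$ for an absolute constant $c > 0$; hence $u$ is unhappy with probability at least $1 - e^{-c\Delta}$, which for $\Delta \ge 8$ vastly exceeds $p \le 1/\Delta^\Delta$. (If one prefers to avoid Chernoff, one can split into the $\{\M\}$ count and the $\{\P\}$ count separately: either $\P$ appears on $\ge \Delta/2$ edges with decent probability, or $\M$ does — but $\M$ per word appears at most once, so its marginal per edge is at most $1/\Delta$ on average, forcing $\P$'s marginal on the pigeonhole port to be large; then a single-sided Chernoff on the $\P$-count suffices.) This contradicts the assumed local error probability, proving \lemmaref{lem:prob-base}.

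The main obstacle I anticipate is bookkeeping the two word-shapes simultaneously: the $\M$-heavy words contribute $\P$'s only through the shared $\O^y$ part, not at all, so "$\{\M,\P\}$-label on port $p$" mixes a per-node event that is either "this node output the $\M$-shape and port $p$ got the $\M$" or "this node output the $\P$-shape and port $p$ got a $\P$". To make the pigeonhole clean I would instead count, per white node, the number of ports receiving a label in $\{\M,\P\}$; every word of $W_\Delta(x,y)$ has at least $d$ such ports, so summing the indicator over ports and over the (fixed) distribution, some port $p$ has marginal $\ge d/\Delta$. Fixing that port for all neighbors of $u$ and invoking independence of the white nodes' random bits (they are at distance $\ge 2$ apart through $u$, and a $0$-round algorithm sees only its own bits) then makes the Chernoff step routine. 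One should double-check the constant: $d/\Delta \ge 1 - 2/8 = 3/4$ and $x+y+1 + (x+1) \le \Delta/4 + \Delta/8 + 2$; for $\Delta \ge 8$ this is $\le 3\Delta/8 + 2 \le \Delta/2$, so the gap between $3\Delta/4$ (expected) and $\Delta/2$ (threshold) is $\Omega(\Delta)$, comfortably enough for the Chernoff tail to beat $\Delta^{-\Delta}$.
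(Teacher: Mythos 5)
Your argument rests on the claim that ``every word of $W_\Delta(x,y)$ contains at least $d$ edges labeled from $\{\M,\P\}$,'' but this is false: the word $\M\s\O^{d-1}\s\O^y\s\X^x$ contains exactly \emph{one} symbol from $\{\M,\P\}$ (the single $\M$) and $d-1+y$ copies of $\O$. You appear to conflate the $\O$-block of the $\M$-shape with a $\P$-block. Consequently the pigeonhole step collapses: if the algorithm outputs the $\M$-shape with probability $q$ close to $1$, the expected number of $\{\M,\P\}$-labeled ports is about $q\cdot 1 + (1-q)\cdot d$, which can be as small as $1$, so no port has $\{\M,\P\}$-marginal anywhere near $3/4$. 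In that regime each neighbor of $u$ contributes at most one $\M$ and no $\P$'s, so the count of $\{\M,\P\}$-edges at $u$ concentrates around $O(1)$, not $3\Delta/4$, and the Chernoff step has nothing to work with. The parenthetical fallback (``forcing $\P$'s marginal on the pigeonhole port to be large'') inherits the same error: there is nothing forcing $\P$'s marginal to be large when the algorithm favors the $\M$-shape.

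What is missing is a case split on the mixing probability $q$ between the two shapes, which is exactly how the paper proceeds. When $q\ge 1/2$, one pigeonholes on $\M$ (marginal at least $q/\Delta\ge 1/(2\Delta)$ on some port) and forces $x+2$ incident $\M$'s, exceeding the $x+1$ cap in $B_\Delta(x,y)$; when $q\le 1/2$, one pigeonholes on $\P$, and additionally conditions on \emph{all} $\Delta$ neighbors choosing the $\P$-shape (to rule out stray $\M$'s that would let $u$ fall into the first branch of $B_\Delta(x,y)$), then forces $x+y+1$ incident $\P$'s. The two cases require different thresholds ($x+2$ versus $x+y+1$) and different witnessing events, which is precisely why a single ``count $\{\M,\P\}$-edges and apply Chernoff'' argument cannot be made to work. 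You would need to reintroduce the case distinction and bound each failure event separately, at which point you essentially recover the paper's proof; the Chernoff bound is not needed and in fact does not apply uniformly across the two cases.
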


\begin{proof}
    Fix a white randomized algorithm $A$ that runs in $0$ rounds. As each white node has no knowledge of their neighborhood, they all have the same probability $q$ to choose the output $\M\s\O^{\Delta-x-1}X^x$, and probability $1-q$ to output $\P^{\Delta-x-y}\s \O^y\s\X^x$. There are two cases:
    \begin{itemize}
        \item $q\ge\frac12$: Consider a black node $u$ with $\Delta$ white neighbors $v_1, v_2, \dotsc, v_\Delta$. As the port numbering of $v_i$ is chosen uniformly at random, the probability that $v_i$ labels the edge $\{v_i, u\}$ with $\M$ is $q/\Delta \ge 1/(2\Delta)$. Now if the first $x+2$ white neighbors output $\M$ on the connecting edge, node $u$ will be unhappy, and this happens with probability at least
        \[
        \frac{1}{(2\Delta)^{x+2}}
        \ge \frac{1}{(2\Delta)^{\Delta/8+2}}
        \ge \frac{1}{\Delta^\Delta}.
        \]
        \item $q\le\frac12$: Now a black node $u$ is unhappy if all white neighbors produce an output of type $\P^{\Delta-x-y}\s \O^y\s\X^x$, and furthermore the first $x+y+1$ neighbors output $\P$ on the connecting edge; this happens with probability at least
        \[
        \frac{1}{2^\Delta} \cdot \frac{1}{\Delta^{x+y+1}}
        \ge \frac{1}{2^\Delta} \cdot \frac{1}{\Delta^{\Delta/4+1}}
        \ge \frac{1}{\Delta^\Delta}.
        \qedhere
        \]
    \end{itemize}
\end{proof}

\subsection{Putting things together}

\begin{lemma}\label{lem:prob-multi}
    For a sufficiently large $\Delta$, there is no white algorithm that solves $\Pi_\Delta(x,0)$ for $x \le \sqrt{\Delta}$ in $T \le \sqrt{\Delta}/16$ rounds in trees with local error probability $p < 2^{-\Delta^{2T+2}}$.
\end{lemma}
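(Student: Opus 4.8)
The plan is to derive a contradiction by using \corollaryref{cor:prob-speedup} to compress the hypothetical algorithm all the way down to a $0$-round algorithm, and then invoking \lemmaref{lem:prob-base}. So suppose, for contradiction, that $A$ is a white randomized algorithm that solves $\Pi_\Delta(x,0)$ in $T \le \sqrt{\Delta}/16$ rounds in trees with local error probability $p < 2^{-\Delta^{2T+2}}$, for some $x \le \sqrt{\Delta}$ and some sufficiently large $\Delta$.

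First I would check that \corollaryref{cor:prob-speedup} is applicable, i.e.\ that $\Delta \ge x + y + T(x + 1 + (T-1)/2)$ with $y = 0$. Using $x \le \sqrt{\Delta}$, $T \le \sqrt{\Delta}/16$, and $(T-1)/2 \le \sqrt{\Delta}/32$, the right-hand side is at most $\sqrt{\Delta} + \tfrac{\sqrt{\Delta}}{16}\bigl(\sqrt{\Delta} + 1 + \tfrac{\sqrt{\Delta}}{32}\bigr) \le \Delta/8$ once $\Delta$ is large enough, which is well below $\Delta$. (I would also note along the way that the extreme smallness of $p$ keeps every intermediate error probability below the $1/4^{\Delta+1}$ threshold required by \lemmaref{lem:prob-speedup}, since each such probability is at most $(5\Delta)^2 p^{1/(\Delta+1)^T} \le (5\Delta)^2 2^{-\Delta^2/2}$.) Applying the corollary then yields a white randomized algorithm solving $\Pi_\Delta(x',y')$ in $0$ rounds in trees with local error probability $p'$, where $x' = x + T$, $y' = T\bigl(x + (T-1)/2\bigr)$, and $p' \le (5\Delta)^2 p^{1/(\Delta+1)^T}$.

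Next I would verify the three hypotheses of \lemmaref{lem:prob-base} for $\Pi_\Delta(x',y')$. For the parameters: $x' = x + T \le \sqrt{\Delta} + \sqrt{\Delta}/16 \le \Delta/8$ and $y' \le \tfrac{\sqrt{\Delta}}{16}\bigl(\sqrt{\Delta} + \tfrac{\sqrt{\Delta}}{32}\bigr) = \tfrac{33}{512}\Delta \le \Delta/8$, both for large $\Delta$, and of course $\Delta \ge 8$. For the error probability I need $p' \le 1/\Delta^\Delta$. Since $T/\Delta \to 0$, we have $(\Delta+1)^T = \Delta^T(1+1/\Delta)^T \le 2\Delta^T$ for large $\Delta$, hence $p^{1/(\Delta+1)^T} < 2^{-\Delta^{2T+2}/(2\Delta^T)} = 2^{-\Delta^{T+2}/2}$; multiplying by $(5\Delta)^2 = 2^{O(\log \Delta)}$ still leaves $p'$ far below $2^{-\Delta \log_2 \Delta} = 1/\Delta^\Delta$, because $\Delta^{T+2}/2 \ge \Delta^2/2$ dominates $\Delta \log_2 \Delta$. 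This contradicts \lemmaref{lem:prob-base}, so no such $A$ exists.

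The only step needing any care is the last one: controlling $(\Delta+1)^T$ against $\Delta^T$ inside the exponent of $p'$. This is precisely why the hypothesis carries a factor $\Delta^{2T+2}$ (roughly the square of $\Delta^{T}$) rather than just $\Delta^{T}$ — the speedup loses a full factor of order $(\Delta+1)^T$ in the error exponent over the $T$ iterations of the degradation $p \mapsto 5\Delta\, p^{1/(\Delta+1)}$, and what remains must still grow faster than $\Delta \log \Delta$ to beat the $0$-round base-case threshold $1/\Delta^\Delta$. Everything else is routine inequality chasing of the type done in the deterministic amplification argument.
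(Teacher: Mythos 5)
Your proof is correct and follows essentially the same strategy as the paper: apply \corollaryref{cor:prob-speedup} to compress $A$ down to a $0$-round algorithm, check the parameter bounds $x', y' \le \Delta/8$, and then contradict \lemmaref{lem:prob-base} via the inequality $p' \le (5\Delta)^2 p^{1/(\Delta+1)^T} < 1/\Delta^\Delta$. One small but worthwhile addition you make is the explicit check that every intermediate error probability $p_i$ stays below the $1/4^{\Delta+1}$ threshold needed for \lemmaref{lem:prob-speedup} to apply at each of the $T$ iterations---a hypothesis that the paper's statement of \corollaryref{cor:prob-speedup} does not surface, but which must indeed hold for the iteration to be legitimate, and which your bound $(5\Delta)^2 2^{-\Delta^2/2} \le 1/4^{\Delta+1}$ verifies cleanly.
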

\begin{proof}
    Assume that $A$ solves $\Pi_\Delta(x,0)$ in $T \le \sqrt{\Delta}/16$ rounds with local error probability $p$. By \corollaryref{cor:prob-speedup}, we obtain a white randomized algorithm $A'$ that solves $\Pi_\Delta(x',y')$ in $0$ rounds with local error probability $p'$, where
    \begin{align*}
    x' &= x + T \le \frac{17}{16} \sqrt{\Delta} \le \frac{1}{8} \Delta, \\
    y' &= y + T\bigl(x + (T-1)/2\bigr) \le \frac{33}{512}\Delta \le \frac{1}{8} \Delta, \\
    p' &\le (5\Delta)^2 p^{1/(\Delta+1)^T}.
    \end{align*}
    By \lemmaref{lem:prob-base} we have $p' > 1/\Delta^\Delta$ and therefore
    \[
    p
    > \frac{1}{\bigl((5\Delta)^2 \Delta^\Delta\bigr)^{(\Delta+1)^T}}
    > \frac{1}{2^{\Delta^{2T+2}}}.
    \qedhere
    \]
\end{proof}

\begin{theorem}\label{thm:prob-k-matching}
    There exists a constant $c>0$ such that for any $n$ and $\Delta$ large enough satisfying that at least one of $n$ and $\Delta$ is even and $\Delta < n/10$,
    any algorithm that finds a $\sqrt{\Delta}$-matching in $\Delta$-regular graphs of $n$ nodes with probability at least $1-1/n$ requires at least $c \min\{\sqrt{\Delta},\log_\Delta \log n\}$ rounds.
\end{theorem}
\begin{proof}
    Let us assume for a contradiction that for any constant $c > 0$ and any positive integers $n_0$ and $\Delta_0$ there exist $n > n_0$ and $\Delta > \Delta_0$ satisfying $\Delta < n/10$ such that there is an algorithm that finds a $\sqrt{\Delta}$-matching in $\Delta$-regular graphs of $n$ nodes in less than $c \min\{\sqrt{\Delta},\log_\Delta \log n\}$ rounds with probability at least $1-1/n$.
    
    By \lemmaref{lem:k-matching-pi} there is then a white randomized algorithm $A$ that solves $\Pi_\Delta(\sqrt{\Delta},0)$ with only an additional constant overhead. Let $\bar{c}$ be the largest $c \le 0.1$ such that the obtained algorithm $A$ runs in $T < \sqrt{\Delta}/16$ rounds.

    We fix $c$ to be the minimum of $\bar{c}$ and the constant $c$ of Lemma~\ref{lem:regulargraphs}. Then, we fix $\Delta_0$ as the minimum value required to apply \lemmaref{lem:prob-multi}, and $n_0$ as the minimum value required to apply Lemma~\ref{lem:regulargraphs}. We then apply Lemma~\ref{lem:regulargraphs} to construct a $\Delta$-regular graph of $n$ nodes that contains a node $v$ such that the radius-$t$ neighborhood of $v$ is isomorphic to a $\Delta$-regular tree, for some $t \ge c \log_\Delta n$.
    Since $T < c \log_\Delta \log n \le t$ and $T < \sqrt{\Delta}/16$, by \lemmaref{lem:prob-multi}, algorithm $A$ fails in such a neighborhood with probability at least
    \[
    \frac{1}{2^{\Delta^{2T+2}}}
    > \frac{1}{2^{\Delta^{\log_\Delta \log n}}}
    = \frac{1}{n}.
    \]
    Hence the global success probability cannot be $1-1/n$.
\end{proof}

Now the same idea as in \theoremref{thm:full} gives our main result:

\begin{theorem}\label{thm:prob-full}
    There exists a constant $c>0$ such that for any $n$ and $\Delta$ large enough satisfying that at least one of $n$ and $\Delta$ is even and $\Delta < n/10$,
    any LOCAL-model algorithm that finds a maximal matching in $\Delta$-regular graphs of $n$ nodes with probability at least $1-1/n$ requires at least $c \min\{\Delta,\log_\Delta \log n\}$ rounds.
\end{theorem}

Note that $\Delta$ and $\log_\Delta \log n$ become roughly the same by setting $\Delta \approx \log \log n / \log \log \log n$. Hence, if we consider $\Delta$ to be an upper bound of the maximum degree, we directly get the following.

\begin{corollary}\label{cor:local-maxdeg}
    Any LOCAL-model algorithm that finds a maximal matching in graphs of maximum degree at most $\Delta$ with probability at least $1-1/n$ requires $\Omega(\min\{\Delta,\log \log n / \log \log \log n\})$ rounds.
\end{corollary}

The following corollary gives a lower bound for the running time of any algorithm for maximal matching, if we express it solely as a function of $n$.

\begin{corollary}\label{cor:local-maxn}
    Any LOCAL-model algorithm that finds a maximal matching in $\Delta$-regular graphs with probability at least $1-1/n$ requires $\Omega(\log \log n / \log \log \log n)$ rounds.
\end{corollary}

\subsection{Deterministic lower bound via speedup simulation}

Our randomized lower bound implies also a stronger deterministic lower bound. The proof is again based on a speedup argument: we show that the existence of a (too) fast deterministic algorithm implies the existence of an even faster deterministic algorithm. Using the simple observation that the randomized complexity of a problem is at most its deterministic complexity, we will then obtain a contradiction with \theoremref{thm:prob-full}. Our proof is similar to \citet[Theorem 6]{chang16exponential}, and the coloring algorithm follows the idea of \citet[Remark 3.6]{Barenboim2016}.

\begin{theorem}\label{thm:det-full}
    There exists a constant $c>0$ such that for any $n$ and $\Delta$ large enough satisfying that at least one of $n$ and $\Delta$ is even and $\Delta < n/10$,
    any deterministic LOCAL-model algorithm that finds a maximal matching in $\Delta$-regular graphs of $n$ nodes requires at least $c \min\{\Delta,\log_\Delta n\}$ rounds.
\end{theorem}
\begin{proof}
    Let us assume for a contradiction that for any constant $c > 0$ and any positive integers $N_0$ and $\Delta_0$ there exist $N > N_0$ and $\Delta > \Delta_0$ satisfying $\Delta < N/10$ such that there is a deterministic algorithm $A$ that finds a maximal matching in $\Delta$-regular graphs of $N$ nodes in less than $T = c \min\{\Delta,\log_\Delta N\}$ rounds. We use this assumption to create a randomized algorithm that violates Theorem~\ref{thm:prob-full}.
    
    Let $\hat{c}$, $N_0$, and $\Delta_0$ be, respectively, the constant $c$, the minimum size $n$, and the minimum degree $\Delta$, required to apply Theorem~\ref{thm:prob-full}. Let $\bar{c}$ be the minimum of $0.1$ and $\hat{c}$. Let $c = \epsilon \bar{c}$ for some positive constant $\epsilon$ to be fixed later.
    We get that there exists an algorithm $A$ that finds a maximal matching in $\Delta$-regular graphs of $N$ nodes in less than $T = c \min\{\Delta,\log_\Delta N\}$ rounds, for some specific $N > N_0$ and $\Delta > \Delta_0$. We show how to construct an algorithm $A'$ that is able to find a maximal matching in any $\Delta$-regular graphs of $n = 2^N$ nodes in less than $\epsilon^{-1} T$ rounds. Note that $T = c \min\{\Delta,\log_\Delta N\} = c \min\{\Delta,\log_\Delta \log n\}$. This implies that we can solve maximal matching on $\Delta$-regular graphs ($\Delta > \Delta_0$) of size $n = 2^N > N_0$ in $\epsilon^{-1}T =  \bar{c} \min\{\Delta,\log_\Delta \log n\}$, a contradiction with Theorem~\ref{thm:prob-full}.
    
    Let $G$ be any $\Delta$-regular graph of $n$ nodes. We show how to simulate $A$ on $G$, for any possible given assignment of unique identifiers in $\{1,\ldots,\poly(n)\}$ for the nodes of $G$. In order to simulate $A$ we need to compute a labeling that \emph{locally} looks like an assignment of unique identifiers of size $N$, and we need to make sure that each neighborhood contains less than $N$ nodes.   
    We compute an $N$-coloring of $G^{2T+2}$, the $(2T+2)$th power of $G$, using three iterations of Linial's coloring algorithm~\cite[Corollary 4.1]{Linial1992}. This algorithm can be used to color a $k$-colored graph of maximum degree $\bar \Delta$ with e.g.
    \[
    O\bigl(\bar \Delta^2 \bigl(\log \log \log k + \log \bar \Delta\bigr)\bigr)
    \]
    colors in three rounds. The power graph $G^{2T+2}$ has maximum degree $\bar \Delta \leq \Delta^{2T+2}$ and thus we get a coloring of $G^{2T+2}$ with
    \[
    O\bigl(\Delta^{2(2T+2)} \bigl(\log \log N + \log \Delta^{2T+2}\bigr)\bigr)
    \]
    colors. Since $T \le c \log_\Delta N$, we see that each $T$-radius neighborhood contains $O(\Delta^T) < N$ nodes and that the number of colors is less than $N$.  The coloring can be computed in time $O(T)$ in the original graph $G$.

    Finally we simulate $A$ on the computed coloring. If we look at the radius-$(T+1)$ neighborhood of any node in $G$, the number of nodes is less than $N$, and the coloring looks like an assignment of unique identifiers from $\{1,2,\dotsc,N\}$, and thus the output of $A$ is well defined and correct by a standard argument---see e.g.\ \cite[Theorem 6]{chang16exponential}. Furthermore, the simulation of $A$ on $G$ can be done in $T$ rounds. Our simulation runs in $O(T)$ rounds on total.
    
    We can easily transform this deterministic algorithm into a randomized algorithm with the same runtime on the same graph class: in the beginning, each node simply picks a random value from $\{1,2,\dotsc,n^3\}$, interprets its value as an identifier, and then runs the deterministic simulation with these identifiers. The picked values are globally unique with probability at least $1-1/n$, guaranteeing that the randomized algorithm is correct w.h.p.
    As previously stated, the obtained running time gives a contradiction with \theoremref{thm:prob-full}.
\end{proof}

By setting $\Delta \approx \log n / \log \log n$ in Theorem~\ref{thm:det-full} we directly get the following.
\begin{corollary}\label{cor:local-det-maxn}
    Any deterministic LOCAL-model algorithm that finds a maximal matching in $\Delta$-regular graphs requires $\Omega(\log n / \log \log n)$ rounds.
\end{corollary}

If we consider $\Delta$ to be an upper bound of the maximum degree, we directly get the following.
\begin{corollary}\label{cor:local-det-maxdeg}
    Any deterministic LOCAL-model algorithm that finds a maximal matching in graphs of maximum degree $\Delta$ requires $\Omega(\min\{\Delta, \log n / \log \log n\})$ rounds.
\end{corollary}

\subsection{Lower bounds for MIS}

If we have any algorithm that finds an MIS, we can simulate it in the line graph and obtain an algorithm for MM; the simulation overhead is constant and the increase in the maximum degree is also bounded by a constant factor. We obtain:

\begin{corollary}\label{cor:prob-mis}
    Any LOCAL-model algorithm that finds a maximal independent set in graphs of maximum degree $\Delta$ with probability at least $1-1/n$ requires $\Omega(\min\{\Delta, \log \log n / \log \log \log n\})$ rounds.
\end{corollary}

\begin{corollary}\label{cor:det-mis}
    Any deterministic LOCAL-model algorithm that finds a maximal independent set in graphs of maximum degree $\Delta$ requires $\Omega(\min\{\Delta, \log n / \log \log n\})$ rounds.
\end{corollary}

\subsection{Matching upper bounds}

Recall the bound from \theoremref{thm:prob-full}: it excludes the possibility of randomized distributed algorithms that find a maximal matching in $o(\Delta) + o(\log \log n / \log \log \log n)$ rounds in general. The first term is optimal in general: there are deterministic and randomized algorithms that find a maximal matching in $O(\Delta) + o(\log \log n / \log \log \log n)$ rounds~\cite{panconesi01simple}. The second term is optimal for randomized MM algorithms in trees: there is a randomized algorithm that finds a maximal matching in trees in $o(\Delta) + O(\log \log n / \log \log \log n)$ rounds \cite[Theorem 7.3]{Barenboim2016}. Note that the construction of our lower bounds ensures that they already hold in trees.

\section{Discussion}

We have learned that the prior algorithms for MM and MIS with a running time of $O(\Delta + \log^* n)$ rounds~\cite{panconesi01simple,barenboim14distributed} are optimal for a wide range of parameters: in order to solve MM or MIS in time $o(\Delta) + g(n)$, we must increase $g(n)$ from $O(\log^* n)$ all the way close to $O(\log \log n)$.

A priori, one might have expected that we should be able to achieve a smooth transition for MM algorithms between $O(\Delta + \log^* n)$ and $O(\log \Delta + \log^c \log n)$, which are the currently best known complexities as a function of $n$ and $\Delta$, respectively. However, this turned out to be not the case. We conjecture that the qualitative gap between $\log^* n$ and $\log^c \log n$ regions is closely related to similar gaps in the landscape of locally checkable labeling problems~\cite{chang16exponential}.

\subsection{Open questions}

After the preliminary conference version of this work, one of the key open questions was related to the complexity of maximal independent set in the region $\Delta \gg \log \log n$, but thanks to the recent work by \citet{Rozhon2020}, this question is now largely settled. We now know that the maximal independent set problem cannot be much harder to solve than the maximal matching problem.

However, the complexity of vertex coloring and edge coloring is still wide open. Here is one concrete example of a big open question, closely related to the theme of the current work: Is it possible to find a $(\Delta+1)$-vertex coloring in $O(\log \Delta + \log^* n)$ rounds, for all $\Delta$?

Several more fine-grained questions related to the complexity of MM still remain: For example, we proved that MM requires $\Omega(\min\{\Delta,\log \log n / \log \log \log n\})$ rounds in trees, even for randomized algorithms, and the dependency on $n$ is also tight in trees~\cite{Barenboim2016}. In general graphs the current randomized algorithms take $O(\log \Delta + \log^c \log n)$ rounds for $c > 1$. A natural open question is whether or not MM in general graphs is strictly harder than MM in trees, if we allow the dependency on $\Delta$ to be $O(\log \Delta)$. Proving such a separation through the speedup simulation technique would require one to be able to handle cycles, which is an interesting open question by itself. Also, when we consider algorithms that take $\poly \log \log n$ rounds as a function of $n$, there is a gap in the complexity as a function of $\Delta$: the current lower bound is $\Omega(\log \Delta / \log \log \Delta)$, while the upper bound is $O(\log \Delta)$; it is wide open whether the speedup simulation technique is applicable in the study of such questions.

\subsection{Recent follow-up work}

Computers played a key role in the present work especially in the discovery of an appropriate problem family \eqref{eq:pixyd}: speedup simulation is a fairly laborious but mechanical process and hence well suited to be automated. The computer program that we used to help us with this project has been now further developed into a freely available open-source web application known as Round Eliminator~\cite{Olivetti2019}, and it has already been used to assist in the discovery of several other lower and upper bounds, see e.g.~\cite{Balliu2019a,Brandt2020trulytight,Balliu2020rulingset}.

Recent work has also led to an even more fine-grained characterization of the distributed computational complexity of maximal matchings. Recall the trivial algorithm for bipartite maximal matching from Section~\ref{sec:deterministic}; one can verify that the round complexity of the algorithm is exactly $2\Delta-1$. While the present work showed that this is asymptotically optimal, the latest follow-up work~\cite{Brandt2020trulytight} showed that this is exactly optimal: the problem cannot be solved in $2\Delta-2$ rounds.

\begin{acks}
This work is an extended and revised version of a preliminary conference report that appeared in the 60th Annual IEEE Symposium on Foundations of Computer Science (FOCS 2019).

We would like to thank Mohsen Ghaffari for pointing out that the randomized lower bound implies a better deterministic lower bound, and Seth Pettie for pointing out that the bound is tight for trees, as well as for many other helpful comments. Many thanks also to Laurent Feuilloley and Tuomo Lempi\"ainen for comments and discussions, and to the reviewers of the previous versions of this work for their numerous helpful suggestions.

This work was supported in part by the Academy of Finland, Grants 285721 and 314888 and the Project ESTATE (ANR-16-CE25-0009-03).
\end{acks}

\bibliographystyle{ACM-Reference-Format}
\bibliography{maximal-matching}

\end{document}